\documentclass{llncs}

\usepackage{oldlfont,amssymb,epsf,stmaryrd}

\usepackage{color}

\newcommand\VL[1]{#1}
\newcommand\VC[1]{}

\newbox\tempa
\newbox\tempb
\newdimen\tempc
\def\mud#1{\hfil $\displaystyle{\mathstrut #1}$\hfil}
\def\rig#1{\hfil $\displaystyle{#1}$}
\def\irulehelp#1#2#3{\setbox\tempa=\hbox{$\displaystyle{\mathstrut #2}$}%
                        \setbox\tempb=\vbox{\halign{##\cr
        \mud{#1}\cr
        \noalign{\vskip\the\lineskip}%
        \noalign{\hrule height 0pt}%
        \rig{\vbox to 0pt{\vss\hbox to 0pt{${\; #3}$\hss}\vss}}\cr
        \noalign{\hrule}%
        \noalign{\vskip\the\lineskip}%
        \mud{\copy\tempa}\cr}}%
                      \tempc=\wd\tempb
                      \advance\tempc by \wd\tempa
                      \divide\tempc by 2 }
\def\irule#1#2#3{{\irulehelp{#1}{#2}{#3}%
                     \hbox to \wd\tempa{\hss \box\tempb \hss}}}

\def\fa{\forall}
\def\Ra{\Rightarrow}
\def\lra{\longrightarrow}

\def\apa{\mbox{ $\lra$ \hspace{-1.70em} $\shortmid$ \hspace{-.70em} $\shortmid$ \hspace{0.7em}}}
\def\apar{\mbox{ $\lra^{*}$ \hspace{-2.15em} $\shortmid$ \hspace{-.70em} $\shortmid$ \hspace{1.15em}}}

\begin{document}
\title{Embedding Pure Type Systems\\ in the lambda-Pi-calculus modulo}
\author{Denis Cousineau and Gilles Dowek}
\institute{\'Ecole polytechnique and INRIA\\
LIX, \'Ecole polytechnique,
91128 Palaiseau Cedex, France. \\ ~ \\
{\scriptsize
{\tt Cousineau@lix.polytechnique.fr, http://www.lix.polytechnique.fr/\~{}cousineau} \\
{\tt Gilles.Dowek@polytechnique.edu, http://www.lix.polytechnique.fr/\~{}dowek}
}
}
\maketitle

\begin{abstract}
The lambda-Pi-calculus allows
to express proofs of minimal predicate logic. It can be
extended, in a very simple way, by adding computation rules. This leads
to the lambda-Pi-calculus modulo.  We show in this paper that this
simple extension is surprisingly expressive and, in particular, that
all functional Pure Type Systems, such as the system F, or the
Calculus of Constructions, can be embedded in it. And, moreover, 
that this embedding is conservative under termination hypothesis.
\end{abstract}

The {\em $\lambda \Pi$-calculus} is a dependently typed
lambda-calculus that allows to express proofs of minimal predicate
logic through the Brouwer-Heyting-Kolmogorov interpretation and the
Curry-de Bruijn-Howard correspondence.  It can be extended in several
ways to express proofs of some theory. A first solution is to express
the theory in Deduction modulo \cite{DHK,DW}, {\em i.e.}  to orient
the axioms as rewrite rules and to extend the $\lambda \Pi$-calculus
to express proofs in Deduction modulo \cite{Blanqui}. We get this way
the {\em $\lambda \Pi$-calculus modulo}.  This idea of extending the
dependently typed lambda-calculus with rewrite rules is also that of
Intuitionistic type theory used as a logical framework \cite{NPS}.

A second way to extend the $\lambda \Pi$-calculus is to add typing
rules, in particular to allow polymorphic typing. We get this way the
{\em Calculus of Constructions} \cite{CoquandHuet} that allows to
express proofs of simple type theory and more generally the {\em Pure
Type Systems} \cite{Berardi,Terlouw,Barendregt}. These two kinds of
extensions of the $\lambda \Pi$-calculus are somewhat redundant.  For
instance, simple type theory can be expressed in deduction modulo
\cite{DHKHOL}, hence the proofs of this theory can be expressed in the
$\lambda \Pi$-calculus modulo. But they can also be expressed in the
Calculus of Constructions. This suggests to relate and compare these
two ways to extend the $\lambda \Pi$-calculus.

We show in this paper that all functional Pure Type Systems can be
embedded in the $\lambda \Pi$-calculus modulo using an appropriate
rewrite system.  This rewrite system is inspired both by the
expression of simple type theory in Deduction modulo and
by the mechanisms of universes {\em \`a la} Tarski \cite{MartinLof} of
Intuitionistic type theory. In particular, this work extends
Palmgren's construction of an impredicative universe in type theory
\cite{Palmgren}.

\section{The $\lambda \Pi$-calculus}

The {\em $\lambda \Pi$-calculus} is a dependently typed
lambda-calculus that permits to construct types depending on terms,
for instance a type $array~n$, of arrays of size $n$, that depends on
a term $n$ of type $nat$. It also permits to construct a function $f$
taking a natural number $n$ as an argument and returning an array of
size $n$. Thus, the arrow type $nat \Ra array$ of simply typed
lambda-calculus must be extended to a dependent product type $\Pi
x:nat~(array~x)$ where, in the expression $\Pi x:A~B$, the occurrences
of the variable $x$ are bound in $B$ by the symbol $\Pi$ (the
expression $A \Ra B$ is used as a shorter notation for the expression
$\Pi x:A~B$ when $x$ has no free occurrence in $B$).  When we apply
the function $f$ to a term $n$, we do not get a term of type
$array~x$ but of type $array~n$. Thus, the application rule must
include a substitution of the term $n$ for the variable $x$.  The symbol
$array$ itself takes a natural number as an argument and returns a
type. Thus, its type is $nat \Ra Type$, {\em i.e.} $\Pi x:nat~Type$.
The terms $Type$, $nat \Ra Type$, ... cannot have type $Type$, because
Girard's paradox \cite{Girard} could then be expressed in the system,
thus we introduce a new symbol $Kind$ to type such terms.  To form
terms, like $\Pi x:nat~Type$, whose type is $Kind$, we need a rule
expressing that the symbol $Type$ has type $Kind$ and a new product
rule allowing to form the type $\Pi x:nat~Type$, whose type is
$Kind$. Besides the variables such as $x$ whose type has type $Type$, we must permit the declaration of variables such as $nat$ of type $Type$, and more generally, variables such as $array$ whose
type has type $Kind$.  This leads to introduce the following syntax
and typing rules.

\begin{definition}[The syntax of $\lambda \Pi$]
The syntax of the $\lambda \Pi$-calculus is 
$$t = x~|~Type~|~Kind~|~\Pi x:t~t~|~\lambda x:t~t~|~t~t$$
The $\alpha$-equivalence and $\beta$-reduction relations are defined as
usual and terms are identified modulo $\alpha$-equivalence. 
\end{definition}

\begin{definition}[The typing rules of $\lambda \Pi^{-}$]
$$\irule{} 
        {[~]~\mbox{well-formed}}
        {\mbox{\bf Empty}}$$
$$\irule{\Gamma \vdash A:Type} 
        {\Gamma[x:A]~~\mbox{well-formed}}
        {\mbox{{\bf Declaration}~$x$ not in $\Gamma$}}$$
$$\irule{\Gamma \vdash A:Kind} 
        {\Gamma[x:A]~~\mbox{well-formed}}
        {\mbox{{\bf Declaration2}~$x$ not in $\Gamma$}}$$
$$\irule{\Gamma~\mbox{well-formed}} 
        {\Gamma \vdash Type:Kind}
        {\mbox{\bf Sort}}$$
$$\irule{\Gamma~\mbox{well-formed}~~x:A \in \Gamma} 
        {\Gamma \vdash x:A}
        {\mbox{\bf Variable}}$$
$$\irule{\Gamma \vdash A:Type~~\Gamma[x:A] \vdash B:Type}
        {\Gamma \vdash \Pi x:A~B:Type}
        {\mbox{\bf Product}}$$
$$\irule{\Gamma \vdash A:Type~~\Gamma[x:A] \vdash B:Kind}
        {\Gamma \vdash \Pi x:A~B:Kind}
        {\mbox{\bf Product2}}$$
$$\irule{\Gamma \vdash A:Type~~\Gamma[x:A] \vdash B:Type~~\Gamma[x:A] \vdash t:B} 
        {\Gamma \vdash \lambda x:A~t:\Pi x:A~B} 
        {\mbox{\bf Abstraction}}$$ 
$$\irule{\Gamma \vdash t:\Pi x:A~B~~\Gamma \vdash u:A} 
        {\Gamma \vdash (t~u):(u/x)B} 
        {\mbox{\bf Application}}$$ 
\end{definition}

It is useful, in some situations, to add a rule allowing
to build type families by abstraction, for instance 
$\lambda x:nat~(array~(2 \times x))$
and rules asserting that a term of type 
$(\lambda x:nat~(array~(2 \times x))~n)$ also has type $array~(2 \times n)$.
This leads to introduce the following extra typing rules.

\begin{definition}[The typing rules of $\lambda \Pi$]
The typing rules of $\lambda \Pi$ are those of $\lambda \Pi^{-}$ and
$$\irule{\Gamma \vdash A:Type~~\Gamma[x:A] \vdash B:Kind~~\Gamma[x:A]
        \vdash t:B} {\Gamma \vdash \lambda x:A~t:\Pi x:A~B}
        {\mbox{\bf Abstraction2}}$$
$$\irule{\Gamma \vdash A:Type~~\Gamma \vdash B:Type~~\Gamma \vdash t:A}
        {\Gamma \vdash t:B}
        {\mbox{\bf Conversion}~A \equiv_{\beta} B}$$
$$\irule{\Gamma \vdash A:Kind~~\Gamma \vdash B:Kind~~\Gamma \vdash t:A}
        {\Gamma \vdash t:B}
        {\mbox{\bf Conversion2}~A \equiv_{\beta} B}$$
where $\equiv_{\beta}$ is the $\beta$-equivalence relation.
\end{definition}

It can be proved that types are preserved by $\beta$-reduction, that
$\beta$-reduction is confluent and strongly terminating and that each
term has a unique type modulo $\beta$-equivalence.

The $\lambda \Pi$-calculus, and even the $\lambda \Pi^-$-calculus, 
can be used to express proofs of minimal predicate
logic, following the Brouwer-Heyting-Kolmogorov interpretation and 
the Curry-de Bruijn-Howard correspondence. 
Let ${\cal L}$ be a language in predicate logic, we consider a context
$\Gamma$ formed with 
a variable $\iota$ of type $Type$ -- or variables $\iota_1, ...,
\iota_n$ of type $Type$ when ${\cal L}$ is many-sorted ---, 
for each function symbol $f$ of ${\cal L}$, a variable $f$ of
type $\iota \Ra ... \Ra \iota \Ra \iota$ and
for each predicate symbol $P$ of ${\cal L}$, a variable $P$ of
type $\iota \Ra ... \Ra \iota \Ra Type$.

To each formula $P$ containing free variables $x_1, ..., x_p$ we
associate a term $P^{\circ}$ of type $Type$ in the context $\Gamma,
x_1:\iota, ..., x_p:\iota$ translating each variable, function symbol
and predicate symbol by itself and the implication symbol and the
universal quantifier by a product.

To each proof $\pi$, in minimal natural deduction, of a sequent
$A_1, ..., A_n \vdash B$  with free variables $x_1, ..., x_p$, 
we can associate a term $\pi^{\circ}$ of type $B^{\circ}$ in the
context $\Gamma, x_1:\iota, ..., x_p:\iota, 
\alpha_1:A_1^{\circ}, ..., \alpha_n:A_n^{\circ}$.
{From} the strong termination of the $\lambda \Pi$-calculus, we get cut
elimination for minimal predicate logic. If $B$ is an atomic formula,
there is no cut free proof, hence no proof at all, of $\vdash B$.

\section{The $\lambda \Pi$-calculus modulo}

The $\lambda \Pi$-calculus allows to express proofs in pure minimal
predicate logic. To express proofs in a theory ${\cal T}$, we can
declare a variable for each axiom of ${\cal T}$ and consider
proofs-terms containing such free variables, this is the idea of the
Logical Framework \cite{HHP}.
However, when considering such open terms most benefits of
termination, such as the existence of empty types,
are lost.  

An alternative is to replace axioms by rewrite rules, moving from
predicate logic to {\em Deduction modulo} \cite{DHK,DW}.  Such
extensions of type systems with rewrite rules to express proofs in
Deduction modulo have been defined in \cite{Blanqui} and \cite{NPS}.  
We shall present now an extension of the $\lambda
\Pi$-calculus: the $\lambda \Pi$-calculus modulo.

Recall that if $\Sigma$, $\Gamma$ and $\Delta$ are contexts, a
substitution $\theta$, binding the variables declared in $\Gamma$, is
said to be {\em of type $\Gamma \leadsto \Delta$ 
in $\Sigma$} if for all $x$ declared of type $T$ in $\Gamma$, we have
$\Sigma \Delta \vdash \theta x:\theta T$,
and that, in this case, if $\Sigma \Gamma \vdash u:U$, then $\Sigma
\Delta \vdash \theta u:\theta U$.

A {\em rewrite rule} is a quadruple $l \lra^{\Gamma, T} r$ where $\Gamma$ is
a context and $l$, $r$ and $T$ are $\beta$-normal terms.
Such a rule is
said to be {\em well-typed} in the context $\Sigma$ if, in the $\lambda
\Pi$-calculus, the context $\Sigma \Gamma$ is well-formed and the
terms $l$ and $r$ have type $T$ in this context.

If $\Sigma$ is a context, $l \lra^{\Gamma,T} r$ is a rewrite
rule well-typed in $\Sigma$ and $\theta$ is a substitution of type
$\Gamma \leadsto \Delta$ in $\Sigma$ then the terms $\theta l$ and
$\theta r$ both have type $\theta T$ in the context $\Sigma \Delta$.
We say that the term $\theta l$ {\em rewrites} to the term $\theta r$.

If $\Sigma$ is a context and ${\cal R}$ a set of rewrite rules
well-typed in the $\lambda \Pi$-calculus in $\Sigma$, then the {\em
congruence generated by ${\cal R}$}, $\equiv_{\cal R}$, is the smallest
congruence such that if $t$ rewrites to $u$ then $t \equiv_{\cal R}
u$.

\begin{definition}[$\lambda \Pi$-modulo]
Let $\Sigma$ be a context and ${\cal R}$ a rewrite system 
in $\Sigma$. Let $\equiv_{\beta {\cal R}}$ be the congruence of terms
generated by the rules of ${\cal R}$ and the rule $\beta$. 

The {\em $\lambda \Pi$-calculus modulo ${\cal R}$} is the 
extension of the $\lambda \Pi$-calculus
obtained by replacing the relation $\equiv_{\beta}$ by $\equiv_{\beta
  {\cal R}}$
in the conversion rules
$$\irule{\Gamma \vdash A:Type~~\Gamma \vdash B:Type~~\Gamma \vdash t:A}
        {\Gamma \vdash t:B}
        {\mbox{\bf Conversion}~A \equiv_{\beta {\cal R}} B}$$
$$\irule{\Gamma \vdash A:Kind~~\Gamma \vdash B:Kind~~\Gamma \vdash t:A}
        {\Gamma \vdash t:B}
        {\mbox{\bf Conversion2}~A \equiv_{\beta {\cal R}} B}$$
\end{definition}

Notice that we can also extend the $\lambda \Pi^-$-calculus with rewrite
rules. In this case, we introduce conversion rules, using the 
congruence defined by the system ${\cal R}$ alone.

\begin{example}
Consider the context $\Sigma = [P:Type,~ Q:Type]$ and the rewrite
system ${\cal R}$ formed with the rule $P \lra (Q \Ra Q)$. The term
$\lambda f:P~\lambda x:Q~(f~x)$ is well-typed in the $\lambda
\Pi$-calculus modulo ${\cal R}$. 
\end{example}

\section{The Pure Type Systems}

There are several ways to extend the functional interpretation of
proofs to simple type theory. The first is to use the fact that simple
type theory can be expressed in Deduction modulo with rewrite rules
only \cite{DHKHOL}. Thus, the proofs of simple type theory can be
expressed in the $\lambda \Pi$-calculus modulo, and even in the
$\lambda \Pi^-$-calculus modulo.  The second is to extend the $\lambda
\Pi$-calculus by adding the following typing rules, allowing for instance the
construction of the type $\Pi P:Type~(P \Rightarrow P)$.
$$\irule{\Gamma \vdash A:Kind~~\Gamma[x:A] \vdash B:Type}
        {\Gamma \vdash \Pi x:A~B:Type}
        {\mbox{\bf Product3}}$$
$$\irule{\Gamma \vdash A:Kind~~\Gamma[x:A] \vdash B:Kind}
        {\Gamma \vdash \Pi x:A~B:Kind}
        {\mbox{\bf Product4}}$$
$$\irule{\Gamma \vdash A:Kind~~\Gamma[x:A] \vdash B:Type~~\Gamma[x:A] \vdash t:B} 
        {\Gamma \vdash \lambda x:A~t:\Pi x:A~B}
        {\mbox{\bf Abstraction3}}$$
$$\irule{\Gamma \vdash A:Kind~~\Gamma[x:A] \vdash B:Kind~~\Gamma[x:A] \vdash t:B} 
        {\Gamma \vdash \lambda x:A~t:\Pi x:A~B}
        {\mbox{\bf Abstraction4}}$$
We obtain the {\em Calculus of Constructions} \cite{CoquandHuet}.

The rules of the simply typed $\lambda$-calculus, the $\lambda
\Pi$-calculus and of the Calculus of Constructions can be presented in
a schematic way as follows.

\begin{definition}[Pure type system]
A {\em Pure Type System} \cite{Berardi,Terlouw,Barendregt} $P$ is
defined by a set $S$, whose elements are 
called {\em sorts}, a subset $A$ of $S \times S$, whose elements are
called {\em axioms} and a subset $R$ of $S \times S \times S$, whose
elements are called {\em rules}. The typing rules of $P$ are 
$$\irule{} 
        {[~]~\mbox{well-formed}}
        {\mbox{\bf Empty}}$$
$$\irule{\Gamma \vdash A:s} 
        {\Gamma[x:A]~~\mbox{well-formed}}
        {\mbox{{\bf Declaration}~$s \in S$ and $x$ not in $\Gamma$}}$$
$$\irule{\Gamma~\mbox{well-formed}} 
        {\Gamma \vdash s_1:s_2}
        {\mbox{\bf Sort}~\langle s_1,s_2\rangle  \in A}$$
$$\irule{\Gamma~\mbox{well-formed}~~x:A \in \Gamma} 
        {\Gamma \vdash x:A}
        {\mbox{\bf Variable}}$$
$$\irule{\Gamma \vdash A:s_1~~\Gamma[x:A] \vdash B:s_2}
        {\Gamma \vdash \Pi x:A~B:s_3}
        {\mbox{\bf Product}~\langle s_1,s_2,s_3\rangle  \in R}$$
$$\irule{\Gamma \vdash A:s_1~~\Gamma[x:A] \vdash B:s_2~~\Gamma[x:A] \vdash t:B} 
        {\Gamma \vdash \lambda x:A~t:\Pi x:A~B}
        {\mbox{\bf Abstraction}~\langle s_1, s_2, s_3 \rangle \in R}$$
$$\irule{\Gamma \vdash t:\Pi x:A~B~~\Gamma \vdash u:A}
        {\Gamma \vdash (t~u):(u/x)B}
        {\mbox{\bf Application}}$$
$$\irule{\Gamma \vdash A:s~~\Gamma \vdash B:s~~\Gamma \vdash t:A}
        {\Gamma \vdash t:B}
        {\mbox{\bf Conversion}~s \in S~~A \equiv_{\beta} B}$$
\end{definition}

The simply typed $\lambda$-calculus is the system defined by the sorts
$Type$ and $Kind$, the axiom $\langle Type,Kind \rangle$ and the rule
$\langle Type, Type, Type \rangle$. 
The $\lambda \Pi$-calculus is the system defined by
the same sorts and axiom and the rules 
$\langle Type, Type, Type \rangle$ and $\langle Type, Kind, Kind
\rangle$. The Calculus of Constructions is the system defined by 
the same sorts and axiom and the rules $\langle Type, Type, Type
\rangle$, $\langle Type, Kind, Kind \rangle$, $\langle Kind, Type,
Type \rangle$ and $\langle Kind, Kind, Kind \rangle$. Other
examples of Pure Type Systems are Girard's systems F and F$\omega$.

In all Pure Type Systems, types are preserved under reduction and the
$\beta$-reduction relation is confluent.  It terminates in some
systems, such as the $\lambda \Pi$-calculus, the Calculus of
Constructions, the system F and the system F$\omega$. Uniqueness of types
is lost in general, but it holds for the $\lambda \Pi$-calculus, the
Calculus of Constructions, the system F and the system F$\omega$, and
more generally for all functional Pure Type Systems.

\begin{definition}[Functional Type System]
A type system is said to be {\em functional} if
$$\langle s_1, s_2 \rangle \in A~\mbox{and}~\langle s_1 ,s_3 \rangle
\in A~\mbox{implies}~s_2 = s_3$$
$$\langle s_1,s_2,s_3\rangle  \in R~\mbox{and}~\langle s_1,s_2,s_4\rangle  \in R~\mbox{implies}~s_3 = s_4$$
\end{definition}

\section{Embedding functional Pure Type Systems in the $\lambda
\Pi$-calculus modulo} 

We have seen that the $\lambda \Pi$-calculus modulo and the Pure Type
Systems are two extensions of the $\lambda \Pi$-calculus. At a first
glance, they seem quite different as the latter adds more typing rules
to the $\lambda \Pi$-calculus, while the former adds more computation
rules. But they both allow to express proofs of simple type theory.

We show in this section that functional Pure Type Systems can, in fact, be
embedded in the $\lambda \Pi$-calculus modulo with an appropriate 
rewrite system.

\subsection{Definition}

Consider a functional Pure Type System $P = \langle S, A, R \rangle$.
We build the following context and rewrite system.

The context $\Sigma_{P}$ contains, for each sort $s$, two variables
$$U_s:Type~~~~and~~~~ \varepsilon_s:U_s \Ra Type$$ 
for each axiom $\langle s_1, s_2 \rangle$, a variable
$$\dot{s_1}:U_{s_2}$$  
and for each rule $\langle s_1, s_2, s_3 \rangle$, a variable
$$\dot{\Pi}_{\langle s_1, s_2, s_3 \rangle}:
\Pi X:U_{s_1}~(((\varepsilon_{s_1}~X) \Ra U_{s_2}) \Ra U_{s_3})$$

The type $U_s$ is called the {\em universe} of $s$ and the symbol
$\varepsilon_{s}$ the {\em decoding function} of $s$.

The rewrite rules are 
$$\varepsilon_{s_2} (\dot{s}_1) \lra U_{s_1}$$
in the empty context and with the type $Type$, and 
$$\varepsilon_{s_3} (\dot{\Pi}_{\langle s_1, s_2, s_3
  \rangle}~X~Y) \lra 
\Pi x:(\varepsilon_{s_1}~X)~(\varepsilon_{s_2}~(Y~x))$$
in the context $X:U_{s_{1}}, Y:(\varepsilon_{s_{1}}~X)
\Rightarrow U_{s_{2}}$ and with the type $Type$.

\medskip

These rules are called the {\em universe-reduction rules}, 
we write $\equiv_{P}$ for the
equivalence relation generated by these rules and the rule $\beta$
and we call the $\lambda \Pi_{P}$-calculus 
the $\lambda \Pi$-calculus modulo these rewrite rules and the
rule $\beta$. To ease
notations, in the $\lambda \Pi_{P}$-calculus, we do not recall the context
$\Sigma_{P}$ in each sequent and write $\Gamma \vdash
t:T$ for $\Sigma_{P} \Gamma \vdash t:T$, and we note $\equiv$ for $\equiv_P$ when there is no ambiguity.

\begin{example}
The embedding of the Calculus of Constructions is defined
by the context
$$\dot{Type}:U_{Kind}~~~~~~~~~~~~U_{Type}:Type~~~~~~~~~~~~U_{Kind}:Type$$
$$\varepsilon_{Type}:U_{Type} \Ra Type~~~~~~~~~~~~\varepsilon_{Kind}:U_{Kind} \Ra Type$$
$$\dot{\Pi}_{\langle Type, Type, Type \rangle} : \Pi X:U_{Type}~(((\varepsilon_{Type}~X) \Ra U_{Type}) \Ra U_{Type})$$
$$\dot{\Pi}_{\langle Type, Kind, Kind \rangle} : \Pi X:U_{Type}~(((\varepsilon_{Type}~X) \Ra U_{Kind}) \Ra U_{Kind})$$
$$\dot{\Pi}_{\langle Kind, Type, Type \rangle} : \Pi X:U_{Kind}~(((\varepsilon_{Kind}~X) \Ra U_{Type}) \Ra U_{Type})$$
$$\dot{\Pi}_{\langle Kind, Kind, Kind \rangle} : \Pi X:U_{Kind}~(((\varepsilon_{Kind}~X) \Ra U_{Kind}) \Ra U_{Kind})$$
and the rules 
$$\varepsilon_{Kind} (\dot{Type}) \lra U_{Type}$$
$$\varepsilon_{Type} (\dot{\Pi}_{\langle Type, Type, Type \rangle}~X~Y) \lra 
\Pi x:(\varepsilon_{Type}~X)~(\varepsilon_{Type}~(Y~x))$$
$$\varepsilon_{Kind} (\dot{\Pi}_{\langle Type, Kind, Kind \rangle}~X~Y) \lra 
\Pi x:(\varepsilon_{Type}~X)~(\varepsilon_{Kind}~(Y~x))$$
$$\varepsilon_{Type} (\dot{\Pi}_{\langle Kind, Type, Type \rangle}~X~Y) \lra 
\Pi x:(\varepsilon_{Kind}~X)~(\varepsilon_{Type}~(Y~x))$$
$$\varepsilon_{Kind} (\dot{\Pi}_{\langle Kind, Kind, Kind \rangle}~X~Y) \lra 
\Pi x:(\varepsilon_{Kind}~X)~(\varepsilon_{Kind}~(Y~x))$$
\end{example}

\begin{definition}[Translation]
Let $\Gamma$ be a context in a functional Pure Type System $P$ and $t$ a term
well-typed in $\Gamma$, we defined the translation $|t|$ of $t$ in
$\Gamma$, that is a term in $\lambda \Pi_{P}$, as follows
\begin{itemize}
\item $|x| = x$,
\item $|s| = \dot{s}$,
\item $|\Pi x:A~B| = \dot{\Pi}_{\langle s_1,s_2,s_3 \rangle}~|A|~(\lambda
x:(\varepsilon_{s_1}~|A|)~|B|)$, where $s_1$ is the type of $A$, $s_2$ is
the type of $B$ and $s_3$ the type of $\Pi x:A~B$, 
\item $|\lambda x:A~t| = \lambda x:(\varepsilon_{s}~|A|)~|t|$,
\item $|t~u| = |t|~|u|$.
\end{itemize}
\end{definition}

\begin{definition}[Translation as a type]
Consider a term $A$ of type $s$ for some sort $s$. The
{\em translation of $A$ as a type} is
$$\| A \| = \varepsilon_s~|A|.$$
Note that if $A$ is a well-typed sort $s'$ then 
$$\|s'\| = \varepsilon_s~\dot{s}' \equiv_{P} U_{s'}.$$
We extend this definition to non well-typed sorts,
such as the sort $Kind$ in the Calculus of Constructions, by
$$\|s'\| = U_{s'}$$
The translation of a well formed context is defined by
\begin{center}
 $\|[~]\| = [~]$ ~~and~~~ $ \|\Gamma [x:A]\| = \|\Gamma\|[x:\|A\|]$
\end{center}
\end{definition}

\subsection{Soundness}

\begin{proposition}
\label{reduct}
\begin{enumerate}
\item $|(u/x)t| = (|u|/x)|t|$, $\|(u/x)t\| = (|u|/x)\|t\|$. 
\item 
If $t \lra_{\beta} u$ then $|t| \lra_{\beta} |u|$. 
\end{enumerate}
\end{proposition}

\begin{proof}
\begin{enumerate}
\item By induction on $t$.
\item Because a $\beta$-redex is translated as a $\beta$-redex.
\end{enumerate}
\end{proof}

\begin{proposition}
\label{Piequiv}
$\|\Pi x:A~B\| \equiv_{P} \Pi x:\|A\|~\|B\|$
\end{proposition}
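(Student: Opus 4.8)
The plan is to compute both sides by unfolding the two translation functions and then to connect them using precisely the second universe-reduction rule together with a single $\beta$-step. Suppose the well-typed product $\Pi x:A~B$ has type $s_3$, with $A$ of type $s_1$ and, in the context extended by $x:A$, $B$ of type $s_2$; functionality of $P$ (uniqueness of types) guarantees that these sorts are uniquely determined, so the triple $\langle s_1,s_2,s_3\rangle \in R$ used by the Product rule is exactly the one appearing in the definition of $|\Pi x:A~B|$. By the definition of the translation as a type and of $|\cdot|$ on products, the left-hand side unfolds to
$$\|\Pi x:A~B\| = \varepsilon_{s_3}~(\dot{\Pi}_{\langle s_1,s_2,s_3\rangle}~|A|~(\lambda x:(\varepsilon_{s_1}~|A|)~|B|)).$$

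First I would apply the second universe-reduction rule, instantiated by the substitution $X := |A|$ and $Y := \lambda x:(\varepsilon_{s_1}~|A|)~|B|$. This rewrites the head redex $\varepsilon_{s_3}(\dot{\Pi}_{\langle s_1,s_2,s_3\rangle}~X~Y)$ to $\Pi x:(\varepsilon_{s_1}~X)~(\varepsilon_{s_2}~(Y~x))$, giving
$$\|\Pi x:A~B\| \equiv_P \Pi x:(\varepsilon_{s_1}~|A|)~(\varepsilon_{s_2}~((\lambda x:(\varepsilon_{s_1}~|A|)~|B|)~x)).$$
The domain $\varepsilon_{s_1}~|A|$ is by definition exactly $\|A\|$ — here I use that $A$ is a well-typed term of type $s_1$, so the defining clause $\|A\| = \varepsilon_{s_1}~|A|$ applies rather than the special clause $\|s'\| = U_{s'}$ reserved for non-well-typed sorts. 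For the codomain I would perform one $\beta$-reduction of the application $(\lambda x:(\varepsilon_{s_1}~|A|)~|B|)~x$, which yields $|B|$, so the codomain becomes $\varepsilon_{s_2}~|B| = \|B\|$. Combining the two identifications gives $\|\Pi x:A~B\| \equiv_P \Pi x:\|A\|~\|B\|$, as required.

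The one point deserving care — and the only place where something could go wrong — is the $\beta$-step. The bound variable of the abstraction $Y$ and the bound variable of the freshly produced $\Pi$ are both written $x$, so I would first rename the abstraction apart, say to $\lambda w:(\varepsilon_{s_1}~|A|)~(w/x)|B|$, before contracting the redex $(\lambda w:(\varepsilon_{s_1}~|A|)~(w/x)|B|)~x$; the contraction substitutes the $\Pi$-bound $x$ for $w$ and recovers $|B|$ with $x$ occurring in the correct positions, so there is no variable capture. Everything else is a direct reading of the definitions, and since both the universe-rule step and the $\beta$-step are instances of the relation $\equiv_P$, the whole chain is a valid $\equiv_P$-equivalence.
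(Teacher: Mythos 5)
Your proof is correct and takes essentially the same route as the paper's: unfold $\|\Pi x:A~B\|$ to $\varepsilon_{s_3}~(\dot{\Pi}_{\langle s_1,s_2,s_3\rangle}~|A|~(\lambda x:(\varepsilon_{s_1}~|A|)~|B|))$, apply the second universe-reduction rule with $X:=|A|$ and $Y:=\lambda x:(\varepsilon_{s_1}~|A|)~|B|$, and contract the resulting $\beta$-redex to reach $\Pi x:\|A\|~\|B\|$. The additional observations you make (functionality pinning down the sorts, and $\alpha$-renaming the abstraction before the $\beta$-step to rule out capture) are sound bookkeeping that the paper leaves implicit.
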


\begin{proof}
Let $s_1$ be the type of $A$, $s_2$ that of $B$ and $s_3$ that of $\Pi
x:A~B$. We have 
$\|\Pi x:A~B\| = \varepsilon_{s_3}~|\Pi x:A~B| =
\varepsilon_{s_3}~(\dot{\Pi}_{\langle s_{1}, s_{2}, s_{3}
  \rangle}~|A|~(\lambda x:(\varepsilon_{s_1}~|A|)~|B|)) \\ \equiv_{P}
\Pi x:(\varepsilon_{s_1}~|A|)~(
\varepsilon_{s_2}~((\lambda x:(\varepsilon_{s_1}~|A|)~|B|)~x)) \equiv_{P}
\Pi x:(\varepsilon_{s_1}~|A|)~(\varepsilon_{s_2}~|B|) \\ = 
\Pi x:\|A\|~\|B\|$.
\end{proof}

\begin{example}
In the Calculus of Constructions, the translation as a type of 
$\Pi X:Type~(X \Ra X)$ is
$\Pi X:U_{Type}~((\varepsilon_{Type}~X) \Ra (\varepsilon_{Type}~X))$. 
The translation as a term of $\lambda X:Type~\lambda x:X~x$ is the term
$\lambda X:U_{Type}~\lambda x:(\varepsilon_{Type} X)~x$. Notice that
the former is the type of the latter. The generalization of this
remark is the following proposition.
\end{example}

\begin{proposition}[Soundness]
\label{correction}
\begin{center}
If $~\Gamma \vdash t:B~$ in $P$
then $~\|\Gamma\|
\vdash |t|:\|B\|~$ in $\lambda \Pi_{P}$.
\end{center}
\end{proposition}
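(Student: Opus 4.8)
The plan is to proceed by induction on the structure of the derivation of $\Gamma \vdash t:B$ in $P$. Since two of the PTS rules (\textbf{Variable} and \textbf{Sort}) carry a context well-formedness judgment among their premises, while \textbf{Declaration} produces one, the induction has to be strengthened to a simultaneous statement: I would prove at the same time that if $\Gamma$ is well-formed in $P$ then $\|\Gamma\|$ is well-formed in $\lambda\Pi_P$, and that if $\Gamma\vdash t:B$ in $P$ then $\|\Gamma\|\vdash|t|:\|B\|$ in $\lambda\Pi_P$. Performing the induction on the derivation itself is what keeps the translation unambiguous: the sort annotations $s_1,s_2,s_3$ that the definition of $|\cdot|$ attaches to a product or an abstraction are precisely those occurring in the premises of the last rule used, and functionality of $P$ guarantees that these sorts are uniquely determined, so that $|t|$ and $\|B\|$ are well defined.

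A sub-step used in almost every case is the remark that the translation of a type always lands in $Type$: for any well-typed $A:s$ one has $\|A\| = \varepsilon_s~|A|$, and since the induction hypothesis on $A:s$ gives $|A|:\|s\|$ with $\|s\|\equiv_P U_s$, a conversion yields $|A|:U_s$ and then $\varepsilon_s~|A|:Type$ by \textbf{Application} from $\varepsilon_s:U_s\Ra Type$. This keeps the entire image of the translation inside the single sort $Type$, so that only the base \textbf{Product}, \textbf{Abstraction} and \textbf{Conversion} rules of $\lambda\Pi$ are ever needed (never their $Kind$-level variants \textbf{Product2}, \textbf{Abstraction2}, \textbf{Conversion2}). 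With this observation the \textbf{Empty}, \textbf{Declaration}, \textbf{Sort} and \textbf{Variable} cases are routine: \textbf{Declaration} uses it to make $\|\Gamma\|[x:\|A\|]$ well-formed, \textbf{Sort} reduces to the fact that $\Sigma_P$ declares $\dot{s_1}:U_{s_2}$ together with $\|s_2\|\equiv_P U_{s_2}$, and \textbf{Variable} is the matching $\lambda\Pi$ rule applied to $x:\|A\|\in\|\Gamma\|$.

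The core of the argument is in the \textbf{Product}, \textbf{Abstraction} and \textbf{Application} cases, where the combinators of $\Sigma_P$ do the real work. For \textbf{Product}, the side condition $\langle s_1,s_2,s_3\rangle\in R$ guarantees that $\dot\Pi_{\langle s_1,s_2,s_3\rangle}$ is declared with exactly the type needed to apply it first to $|A|:U_{s_1}$ and then to the abstraction $\lambda x:(\varepsilon_{s_1}~|A|)~|B|$, which the induction hypothesis on $B:s_2$ types as $(\varepsilon_{s_1}~|A|)\Ra U_{s_2}$; the result has type $U_{s_3}$, and a conversion against $\|s_3\|\equiv_P U_{s_3}$ closes the case. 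For \textbf{Abstraction} and \textbf{Application} I would lean on Propositions~\ref{Piequiv} and~\ref{reduct}: \textbf{Abstraction} builds $\lambda x:\|A\|~|t|$ of type $\Pi x:\|A\|~\|B\|$ and converts to $\|\Pi x:A~B\|$ via Proposition~\ref{Piequiv}, while \textbf{Application} converts $|t|$ to a product type by the same proposition and reads off the type of $|t|~|u|$ as $(|u|/x)\|B\| = \|(u/x)B\|$ using the substitution part of Proposition~\ref{reduct}. The \textbf{Conversion} case uses the second part of Proposition~\ref{reduct}: a $\beta$-step in $P$ maps to a $\beta$-step on translations, so $A\equiv_\beta B$ gives $\|A\|\equiv_\beta\|B\|$, which is contained in $\equiv_P$, and the conversion rule of $\lambda\Pi_P$ applies since both $\|A\|$ and $\|B\|$ are of type $Type$ by the recurring sub-step.

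I expect the main obstacle to be bookkeeping rather than a deep difficulty: at each conversion one must check that the two compared types are genuinely well-typed of type $Type$ in $\lambda\Pi_P$ before invoking \textbf{Conversion}, and one must keep the sort annotations produced by the derivation in step with those demanded by the definition of $|\cdot|$. The latter is exactly where functionality of $P$ is indispensable; without uniqueness of sorts the translation would already be ambiguous, and the \textbf{Product} case, which matches the combinator's declared codomain $U_{s_3}$ against $\|s_3\|$, could fail to line up.
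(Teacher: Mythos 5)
Your proposal is correct, and it rests on exactly the same pillars as the paper's proof: the declared type of $\dot{\Pi}_{\langle s_1,s_2,s_3\rangle}$ in the \textbf{Product} case, Proposition \ref{Piequiv} for \textbf{Abstraction} and \textbf{Application}, and Proposition \ref{reduct} for the substitution step $(|v|/x)\|D\| = \|(v/x)D\|$. The difference is the induction scheme, and it is worth noting. The paper inducts on the structure of the term $t$ and treats only the cases variable, sort, product, abstraction, application; it has no \textbf{Conversion} case and no treatment of context formation, and it states conversions as equalities (e.g.\ ``if $t = s_1$ then $B = s_2$'', when a derivation ending in \textbf{Conversion} only warrants $B \equiv_{\beta} s_2$). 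You instead induct on the typing derivation, strengthened with a simultaneous statement about well-formed contexts, which forces those debts to be discharged: \textbf{Empty}/\textbf{Declaration} yield well-formedness of $\|\Gamma\|$, and the \textbf{Conversion} case is closed by transporting $A \equiv_{\beta} B$ to $\|A\| \equiv_{P} \|B\|$ via Proposition \ref{reduct} --- this is exactly Proposition \ref{equiv}.1, whose proof depends only on Proposition \ref{reduct}, so invoking it here creates no circularity. Your observation that the entire image of the translation stays in the sort $Type$, so that only the $Type$-level rules \textbf{Product}, \textbf{Abstraction}, \textbf{Conversion} of $\lambda\Pi$ are ever used, is also correct and is left implicit in the paper. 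The one item you defer as ``bookkeeping'' --- checking that both sides of each conversion are well-typed of type $Type$, which in the \textbf{Application} case requires correctness of types or subject reduction for $\lambda\Pi_P$, since the premises of the rule do not contain derivations for $A$ and $B$ themselves --- is a genuine obligation, but the paper's proof elides it as well (it writes $\|\Pi x{:}C~D\| = \Pi x{:}\|C\|~\|D\|$ as if it were a syntactic identity). In short: same argument and same key lemmas, but your derivation-based formulation is the rigorous version of the paper's term-based sketch, and it correctly isolates where functionality of $P$ is indispensable.
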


\begin{proof}
By induction on $t$. 
\begin{itemize}
\item If $t$ is a variable, this is trivial.
\item If $t = s_1$ then $B = s_2$ (where $\langle s_1,s_2 \rangle$ is an axiom), we have
$\dot{s_1}:U_{s_2} = \|s_2\|$.
\item If $t = \Pi x:C~D$, let $s_1$ be the type of $C$, $s_2$ that of
$D$ and $s_3$ that of $t$. By induction hypothesis, we have 
\begin{center}
	$\|\Gamma\| \vdash |C|:U_{s_1}~~and~~~\|\Gamma\|, x:\|C\| \vdash |D|:U_{s_2}$
\end{center}
{\em i.e.}
$$\|\Gamma\|, x:(\varepsilon_{s_1}~|C|) \vdash |D|:U_{s_2}$$
Thus
$$\|\Gamma\| \vdash 
(\dot{\Pi}_{\langle s_1, s_2, s_3 \rangle}~|C|~\lambda
x:(\varepsilon_{s_1}~|C|)~|D|):U_{s_{3}}$$ 
{\em i.e.}
$$\|\Gamma\| \vdash |\Pi x:C~D|:\|s_{3}\|$$
\item If $t = \lambda x:C~u$, then 
we have 
$$\Gamma, x:C \vdash u:D$$
and $B = \Pi x:C~D$. 
By induction hypothesis, we have 
$$\|\Gamma\|, x:\|C\| \vdash |u|:\|D\|$$
{\em i.e.}
$$\|\Gamma\|, x:(\varepsilon_{s_1}~|C|) \vdash |u|:\|D\|~~then~~
\|\Gamma\| \vdash \lambda x:(\varepsilon_{s_1}~|C|)~|u|:\Pi x:\|C\|~\|D\|$$
{\em i.e.}
$$\|\Gamma\| \vdash |t|:\|\Pi x:C~D\|$$
\item If $t = u~v$, then we have 
$$\Gamma \vdash u:\Pi x:C~D,~~~
\Gamma \vdash v:C$$
 and $B = (v/x)D$.
By induction hypothesis, we get 
$$\|\Gamma\| \vdash |u|:\|\Pi x:C~D\| = \Pi x:\|C\|~\|D\|~~~and~~~\|\Gamma\| \vdash |v|:\|C\|$$
Thus
$$\|\Gamma\| \vdash |t|:(|v|/x)\|D\| = \|(v/x)D\|$$
\end{itemize}
\end{proof}

\subsection{Termination}

\begin{proposition}
\label{termination1}
If $\lambda \Pi_{P}$ terminates then $P$ terminates. 
\end{proposition}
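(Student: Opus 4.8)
The plan is to argue by contraposition: I will show that any infinite $\beta$-reduction sequence in $P$, starting from a well-typed term, can be transported through the translation $|\cdot|$ into an infinite reduction sequence in $\lambda \Pi_{P}$ starting from a well-typed term, contradicting the termination of $\lambda \Pi_{P}$. Here ``terminates'' is read, as elsewhere in the paper, as strong normalization: every well-typed term admits no infinite reduction sequence.

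Concretely, suppose $P$ does not terminate. Then there are a context $\Gamma$, a term $t_0$ with $\Gamma \vdash t_0 : B$ for some $B$, and an infinite sequence $t_0 \lra_{\beta} t_1 \lra_{\beta} t_2 \lra_{\beta} \cdots$. First I would apply the second part of Proposition \ref{reduct} to each step: since $t_i \lra_{\beta} t_{i+1}$ implies $|t_i| \lra_{\beta} |t_{i+1}|$, the translation turns the source sequence into $|t_0| \lra_{\beta} |t_1| \lra_{\beta} |t_2| \lra_{\beta} \cdots$. The crucial point to check is that each of these steps is genuinely non-empty, so that the image sequence is still infinite: this holds because the translated term contains the $\beta$-redex image of the contracted redex (a $\beta$-redex is translated as a $\beta$-redex), so contracting it requires at least one step and $|t_i| \neq |t_{i+1}|$ whenever $t_i \neq t_{i+1}$.

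Next I would invoke Soundness (Proposition \ref{correction}): from $\Gamma \vdash t_0 : B$ in $P$ we obtain $\|\Gamma\| \vdash |t_0| : \|B\|$ in $\lambda \Pi_{P}$, so $|t_0|$ is a well-typed term of $\lambda \Pi_{P}$. Since every reduction step used is a $\beta$-step, and hence a fortiori a step of the $\lambda \Pi_{P}$ reduction relation (which also contains the universe-reduction rules), we have exhibited an infinite reduction sequence out of a well-typed term of $\lambda \Pi_{P}$. This contradicts the hypothesis that $\lambda \Pi_{P}$ terminates, and the conclusion follows.

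The only delicate point is the non-emptiness of the translated steps, i.e. that the translation does not collapse a genuine reduction to the identity. I expect this to be the main thing to verify carefully, because the clause $|\Pi x:A~B| = \dot{\Pi}_{\langle s_1,s_2,s_3\rangle}~|A|~(\lambda x:(\varepsilon_{s_1}~|A|)~|B|)$ duplicates $|A|$, so a single source step occurring inside $A$ may be mirrored by two target steps rather than one. This duplication is harmless for the termination argument, since we only need at least one target step per source step; it is, however, the reason the proof rests on the mere preservation of reducibility rather than on a step-by-step bijection between reduction sequences.
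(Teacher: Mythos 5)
Your proof is correct and follows essentially the same route as the paper: translate the reduction sequence step by step using Proposition~\ref{reduct}, use Soundness (Proposition~\ref{correction}) to see that the translated starting term is well typed, and conclude from termination of $\lambda \Pi_P$; the paper states this directly where you phrase it as a contraposition, which is the same argument. Your extra observation that the duplication of $|A|$ in the translation of products only helps (each source step yields \emph{at least} one target step) is a sound clarification of a point the paper leaves implicit.
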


\begin{proof}
Let $t_1$ be a well-typed term in $P$ and 
$t_1, t_2, ...$ be a reduction sequence of $t_1$ in $P$. 
By Proposition \ref{correction}, the term $|t_1|$ is well-typed in
$\lambda \Pi_{P}$ and, by Proposition \ref{reduct}, $|t_1|, |t_2|,
...$ is a reduction sequence of $|t_1|$ in $\lambda \Pi_{P}$. Hence it
is finite.
\end{proof}

\subsection{Confluence}

\VC{
\begin{proposition}
For any functional Pure Type System $P$,  the relation $\lra$ is confluent in
$\lambda \Pi_P$
\end{proposition}

Like that of pure $\lambda$-calculus, the reduction relation of
$\lambda \Pi_P$ is not strongly confluent. Thus, we introduce
another reduction relation $(\apa)$ that can reduce, in one step, none
to all the $\beta\cal{R}$-redices that appears in a term, that is
strongly confluent and such that $\apar = ~\lra^{*}$. Then, from the
confluence of the relation $\apa$, we deduce that of the relation
$\lra$. See the long version of the paper for the full proof.}

\VL{We prove in this section that for any functional Pure Type System
$P$, the system $\lambda \Pi_P$ is confluent.  Like that of pure
$\lambda$-calculus, the reduction relation of $\lambda \Pi_P$ is not
strongly confluent: the term $M = (\lambda x~(x~x))~((\lambda y~y)~
0)$ has two one-step reducts: $N_1 = (\lambda x~(x ~x))~0$ and $N_2 =
((\lambda y~y)~ 0)~((\lambda y~y) ~0)$ and these two terms have no
common one-step reduct. Thus, we introduce another reduction relation
$(\apa)$ that can reduce, in one step, none to all the
$\beta\cal{R}$-redices that appears in a term, that is strongly
confluent and such that $\apar = ~\lra^{*}$. Then, from the confluence
of the relation $\apa$, we shall deduce that of the relation $\lra$.

\begin{definition}[Parallel reduction]
\label{parred}
The parallel reduction $(\apa)$ in $\lambda \Pi_P$, is the 
smallest relation on terms that verifies:
\begin{center}

~~~$\irule{} 
        {M \apa M}
        {\mbox{\bf~($\alpha$)}}$
~~~~~~~~~~~~~
$\irule{} 
        {\varepsilon_{s_{2}} ~\dot{s_1} \apa U_{s_{1}}}
        {\mbox{\bf~($\beta$) } ~~  \langle s_1,s_2 \rangle \in {\cal A} }$

$\irule{A \apa A' ~~~M \apa M'} 
        {\lambda x:A~M \apa \lambda x:A'~M'}
        {\mbox{\bf ~($\gamma$)}}$
  ~~~~~~~~~~~~~
$\irule{A \apa A' ~~~ B \apa B'} 
        {\Pi x:A~B \apa \Pi x:A'~B'}
        {\mbox{\bf~($\delta$)}}$

~~~$\irule{M \apa M' ~~~ N \apa N'} 
        {M~N \apa M'~N'}
        {\mbox{\bf ~($\theta_1$)}}$
  ~~~~~~~~~~~~~~~~
$\irule{M \apa M' ~~~ N \apa N'} 
        {(\lambda x:A~M)~N \apa (N'/x)M'}
        {\mbox{\bf~($\theta_2$)}}$
      
\bigskip

$\irule{A \apa A' ~~~~~~ B \apa B'} 
        {\varepsilon_{s_{3}} ~(\dot{\Pi}_{\langle s_1,s_2, s_3 \rangle} ~ A~B) \apa \varepsilon_{s_{3}} ~(\dot{\Pi}_{\langle s_1,s_2, s_3 \rangle} ~ A'~B') }
        {\mbox{\bf ~($\eta_1$)}~~\langle s_1,s_2,s_3 \rangle \in {\cal R}}$

$\irule{A \apa A' ~~~~~~ B \apa B'} 
        {\varepsilon_{s_{3}} ~(\dot{\Pi}_{\langle s_1,s_2, s_3 \rangle} ~ A~B) \apa  \Pi x:(\varepsilon_{s_1}~ A' )~(\varepsilon_{s_2} (B'~x))  }
        {\mbox{\bf ~($\eta_2$)}~~\langle s_1,s_2,s_3 \rangle \in {\cal R}}$

\end{center}

\end{definition}

\begin{proposition}
\label{subst}
For all terms $M$, $M'$, $N$, $N'$ of $\lambda \Pi_P$, 
if $M \apa M'$ and $N \apa N'$, then $(N/x)M \apa (N'/x)M'$.
\end{proposition}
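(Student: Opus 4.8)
The plan is to argue by induction on the derivation of $M \apa M'$, with the standard term-substitution commutation identity
\[ (N'/x)((Q/y)P) = ((N'/x)Q/y)((N'/x)P), \qquad y \neq x,\ y \notin FV(N'), \]
as the sole external ingredient, and under the usual variable convention that every bound variable occurring in $M$ is chosen distinct from $x$ and not free in $N$ or $N'$. The induction hypothesis is applied to each premise $A \apa A'$ of the last rule, keeping the \emph{same} pair $N \apa N'$, so that it yields $(N/x)A \apa (N'/x)A'$. The only two cases that are not pure propagation are the $\beta$-step rule $(\theta_2)$, where the commutation identity is needed, and the reflexivity rule $(\alpha)$, which supplies no sub-derivation to feed the induction and hence needs a separate argument.

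For the congruence rules $(\gamma)$, $(\delta)$, $(\theta_1)$, $(\eta_1)$ and $(\eta_2)$ the treatment is uniform: I push $(N/x)$ through the outermost constructor, observing that the symbols $\varepsilon_s$, $\dot{s}$, $U_s$ and $\dot\Pi_{\langle s_1,s_2,s_3\rangle}$ of $\Sigma_P$ are variables distinct from $x$ and so are fixed by the substitution, apply the induction hypothesis to each premise, and reassemble with the same rule. For example in case $(\eta_2)$ the term $(N/x)(\varepsilon_{s_3}(\dot\Pi~A~B)) = \varepsilon_{s_3}(\dot\Pi~(N/x)A~(N/x)B)$ reduces, by the hypotheses $(N/x)A \apa (N'/x)A'$ and $(N/x)B \apa (N'/x)B'$, to $\Pi z:(\varepsilon_{s_1}~(N'/x)A')~(\varepsilon_{s_2}(((N'/x)B')~z))$, and since the bound $z$ is fresh this is exactly $(N'/x)(\Pi z:(\varepsilon_{s_1}~A')~(\varepsilon_{s_2}(B'~z)))$. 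The base rule $(\beta)$ is immediate, as neither $\varepsilon_{s_2}~\dot{s_1}$ nor $U_{s_1}$ contains $x$, so both sides are unchanged by the substitution.

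The crux is case $(\theta_2)$, where $M = (\lambda y:A~P)~Q$, $M' = (Q'/y)P'$, with $P \apa P'$ and $Q \apa Q'$. Here $(N/x)M = (\lambda y:(N/x)A~(N/x)P)~(N/x)Q$ reduces, by the induction hypotheses and rule $(\theta_2)$, to $((N'/x)Q'/y)((N'/x)P')$, which the commutation identity rewrites as $(N'/x)((Q'/y)P') = (N'/x)M'$, using $y \neq x$ and $y \notin FV(N')$. Finally, for the reflexivity case $(\alpha)$, where $M' = M$, I would prove the auxiliary fact $(N/x)M \apa (N'/x)M$ by a side induction on the structure of $M$: at the leaf $x$ it is the hypothesis $N \apa N'$, at every other leaf it is reflexivity, and at each compound term it follows from the structural hypotheses and the matching congruence rule $(\gamma)$, $(\delta)$ or $(\theta_1)$. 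I expect this reflexivity case and the careful bookkeeping of the commutation identity in $(\theta_2)$ to be the only points demanding attention; all remaining cases are routine transport of the induction hypotheses through the constructors.
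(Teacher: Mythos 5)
Your proof is correct, but it is organized around a different induction than the paper's. You induct on the derivation of $M \apa M'$, whereas the paper inducts on the structure of $M$ (with a nested case analysis on the last rule only when $M$ is an application). The two choices shift the work to different places. Because the reflexivity rule $(\alpha)$ has no premises, your derivation induction gives you nothing to apply the hypothesis to, and you correctly plug this hole with a separate structural induction proving that $N \apa N'$ implies $(N/x)M \apa (N'/x)M$; the paper never needs this auxiliary fact, since under its structural induction the $(\alpha)$ case for an application $P~Q$ is dispatched by applying the structural hypothesis to $P \apa P$ and $Q \apa Q$, and for abstractions and products it is absorbed into the congruence cases. Conversely, the paper's scheme forces it to invoke inversion facts (``the only rules applicable to an abstraction are $(\alpha)$ and $(\gamma)$'') to decompose $M'$ from the shape of $M$, which your derivation induction gets for free from the last rule applied. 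Both proofs turn on the same point in case $(\theta_2)$, the commutation identity $(N'/x)((Q'/y)P') = ((N'/x)Q'/y)((N'/x)P')$ under the freshness conditions $y \neq x$, $y \notin FV(N')$; you state it explicitly with its side conditions, while the paper uses it silently in the chain of equalities. Your organization is the standard Tait--Martin-L\"of one and is arguably cleaner about variable conventions; the paper's avoids the auxiliary reflexivity lemma at the cost of inversion reasoning.
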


\begin{proof} By induction on M.
\begin{itemize}
\item if $M$ is a variable,
\begin{itemize}

\item[$\star$] if $M = x$, then $M' = M = x$ 
\\(because $M \apa M'$ and the only rule we can apply is $(\alpha)$).
\\ Therefore, $(N/x)M = N \apa N' = (N'/x)M'$.

\item[$\star$] if $M = y \neq x$, then, by the rule $(\alpha)$, $(N/x)M = y \apa y = (N'/x)M'$
\\ (and we conclude by the same way for $M=Type$ and $M=Kind$).
\end{itemize}

\item if there exists terms $A$ and $B$ such that $M = \lambda y:A~B$, then there exists terms $A'$ and $B'$ such that $M' = \lambda y:A'~B'$ with $A \apa A'$ and $B \apa B'$ (because the only rules we can apply to an abstraction are $(\alpha)$ and $(\gamma)$). 
\\ By induction hypothesis, we have $(N/x)A \apa (N'/x)A'$ and $(N/x)B \apa (N'/x)B'$.
\\ Therefore, by $(\gamma)$, 
$$(N/x)M = \lambda y:((N/x)A)~(N/x)B \apa \lambda y:((N'/x)A')~(N'/x)B' = (N'/x)M'$$

\item  if there exists terms $A$ and $B$ such that $M = \Pi y:A~B$, then there exists terms $A'$ and $B'$ such that $M' = \Pi y:A'~B'$ with $A \apa A'$ and $B \apa B'$ (because the only rules we can apply to an abstraction are $(\alpha)$ and $(\delta)$). 
\\ By induction hypothesis, we have $(N/x)A \apa (N'/x)A'$ and $(N/x)B \apa (N'/x)B'$.
\\ Therefore, by $(\delta)$, 
$$(N/x)M = \Pi y:((N/x)A)~(N/x)B \apa \Pi y:((N'/x)A')~(N'/x)B' = (N'/x)M'$$

\item  if there exists terms $P$ and $Q$ such that $M = P~Q$,
\\ if the last rule of the derivation of $M \apa M'$ is:

\begin{itemize}

\item[$(\alpha)$] then $M' = M = P~Q$. 
\\We have $P \apa P$ and $Q \apa Q$, then, by induction hypothesis, $(N/x)P \apa (N'/x)P$ and $(N/x)Q \apa (N'/x)Q$. 
\\Therefore, by $(\theta_1)$, 
\\$(N/x)M = (N/x)P ~(N/x)Q \apa (N'/x)P ~(N'/x)Q = (N'/x)M'$.

\item[$(\beta)$] then there exists a rule $\langle s_1, s_2 \rangle$ such that $M=\varepsilon_{s_2}~\dot{s_1}$ and $M'=U_{s_1}$. 
\\ Therefore, by $(\beta)$, $(N/x)M = \varepsilon_{s_2}~\dot{s_1} \apa U_{s_1} = (N'/x)M'$

\item[$(\theta_1)$] then there exists terms $P'$ and $Q'$ such that $M'=P'~Q'$ with $P \apa P'$ and $Q \apa Q'$. By induction hypothesis, we have $(N/x)P \apa (N'/x)P'$ and $(N/x)Q \apa (N'/x)Q'$. Therefore, by $(\theta_1)$,
\\ $(N/x)M = (N/x)P ~(N/x)Q \apa (N'/x)P' ~(N'/x)Q' = (N'/x)M'$

\item[$(\theta_2)$] then there exists terms $A$, $B$, $B'$, $Q'$ such that $P=\lambda y:A~B$ and $M'=(Q'/y)B'$ with $B \apa B'$ and $Q \apa Q'$. 
\\ By induction hypothesis, we have $(N/x)B \apa (N'/x)B'$ and $(N/x)Q \apa (N'/x)Q'$.
\\ Therefore, by $(\theta_2)$,
\\ $ (N/x)M = (\lambda y:(N/x)A~(N/x)B)~(N/x)Q \apa ((N'/x)Q'/y )~(N'/x)B' \\= (N'/x)((Q'/y)B') = (N'/x)M'$

\item[$(\eta_1)$] then there exists a rule $\langle s_1,s_2,s_3 \rangle$ and terms $A$, $A'$, $B$, $B'$ such that $M = \varepsilon_{s_{3}} ~(\dot{\Pi}_{\langle s_1,s_2, s_3 \rangle} ~ A~B)$ and  $M' = \varepsilon_{s_{3}} ~(\dot{\Pi}_{\langle s_1,s_2, s_3 \rangle} ~ A'~B')$, with $A \apa A'$ and $B \apa B'$.
\\ By induction hypothesis, we have $(N/x)A \apa (N'/x)A'$ and $(N/x)B \apa (N'/x)B'$.
\\ Therefore, by $(\eta_1)$,
\\ $(N/x)M = \varepsilon_{s_{3}} ~(\dot{\Pi}_{\langle s_1,s_2, s_3 \rangle} ~ (N/x)A~~(N/x)B) \apa \varepsilon_{s_{3}} ~(\dot{\Pi}_{\langle s_1,s_2, s_3 \rangle} ~ (N'/x)A'~~(N'/x)B') \\ = (N'/x)M'$.

\item[$(\eta_2)$] then there exists a rule $\langle s_1,s_2,s_3 \rangle$ and terms $A$, $A'$, $B$, $B'$ such that $M = \varepsilon_{s_{3}} ~(\dot{\Pi}_{\langle s_1,s_2, s_3 \rangle} ~ A~B)$ and  $M' = \Pi y:(\varepsilon_{s_1}~A')~(\varepsilon_{s_2}~(B'~y))$, with $A \apa A'$ and $B \apa B'$.
\\ By induction hypothesis, we have $(N/x)A \apa (N'/x)A'$ and $(N/x)B \apa (N'/x)B'$.
\\ Therefore, by $(\eta_2)$,
\\ $(N/x)M = \varepsilon_{s_{3}} ~(\dot{\Pi}_{\langle s_1,s_2, s_3 \rangle} ~ (N/x)A~~(N/x)B) \apa \Pi y:(\varepsilon_{s_1}~(N'/x)A')~(\varepsilon_{s_2}~((N'/x)B'~y)) \\ = (N'/x)M'$.
\end{itemize}
\end{itemize}
\end{proof}

Then, following \cite{Lescanne}, we associate, to each term $t$ of
$\lambda\Pi_{P}$, a term $t^{\dag}$, obtained by reducing in parallel
all its $\beta\cal{R}$-redices.

\begin{definition}
\label{fd}
Let $t$ be a term of $\lambda\Pi_{P}$. We define, by induction on the
structure of $t$, the term $t^{\dag}$ as follows:

\begin{itemize} 
\item $x^{\dag} = x,~~~Type^{\dag} = Type,~~~Kind^{\dag} = Kind$

\item $(\lambda x:A~M)^{\dag} = \lambda x:A^{\dag}~M^{\dag},~~~~~(\Pi x:A~B)^{\dag} = \Pi x:A^{\dag}~B^{\dag}$

\item $((\lambda x : A~M)~N)^{\dag} = (N^{\dag}/x)M^{\dag}$

\item $(\varepsilon_{s_2} ~ \dot{s_1})^{\dag} = ~U_{s_1}$, ~if $\langle s_1,s_2 \rangle \in {\cal A}$,

\item $(\varepsilon_{s_3} (\dot{\Pi}_{\langle s_1,s_2,s_3 \rangle} A ~ B))^{\dag} = \Pi x:(\varepsilon_{s_1} A^{\dag})~(\varepsilon_{s_2} (B^{\dag} x))$, ~if $\langle s_1,s_2,s_3 \rangle \in {\cal R}$,

\item $(M~N)^{\dag} = M^{\dag}N^{\dag}$, otherwise.
\end{itemize}
\end{definition}

\begin{proposition}
\label{dagid}
If $M$ is a term of $\lambda \Pi_P$, then ~$M \apa M^{\dag}$
\end{proposition}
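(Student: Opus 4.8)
The plan is to proceed by structural induction on $M$, following exactly the clauses that define $t^{\dag}$ in Definition \ref{fd}: each defining clause of $t^{\dag}$ corresponds to a single rule of the parallel reduction of Definition \ref{parred}, applied to the (parallel-reduced) subterms supplied by the induction hypothesis. So the proof is really a clause-by-clause matching between the two definitions, and no auxiliary construction is required.

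In the base cases $M = x$, $M = Type$, $M = Kind$ we have $M^{\dag} = M$, so rule $(\alpha)$ gives $M \apa M^{\dag}$ at once. For an abstraction $M = \lambda x:A~N$, the induction hypothesis yields $A \apa A^{\dag}$ and $N \apa N^{\dag}$, and rule $(\gamma)$ gives $\lambda x:A~N \apa \lambda x:A^{\dag}~N^{\dag} = M^{\dag}$; the product case $M = \Pi x:A~B$ is identical using $(\delta)$. For a $\beta$-redex $M = (\lambda x:A~N)~P$, the induction hypothesis gives $N \apa N^{\dag}$ and $P \apa P^{\dag}$, and rule $(\theta_2)$ produces $M \apa (P^{\dag}/x)N^{\dag} = M^{\dag}$ in a single step; note that $(\theta_2)$ discards the domain $A$, which matches the defining clause of $t^{\dag}$. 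The two ${\cal R}$-redex clauses are handled symmetrically: $\varepsilon_{s_2}~\dot{s_1} \apa U_{s_1}$ by rule $(\beta)$, and $\varepsilon_{s_3}(\dot{\Pi}_{\langle s_1,s_2,s_3 \rangle}~A~B) \apa \Pi x:(\varepsilon_{s_1}~A^{\dag})~(\varepsilon_{s_2}~(B^{\dag}~x))$ by rule $(\eta_2)$, again feeding $A \apa A^{\dag}$ and $B \apa B^{\dag}$ from the induction hypothesis. Finally, when $M = P~Q$ falls under the ``otherwise'' clause, the induction hypothesis gives $P \apa P^{\dag}$ and $Q \apa Q^{\dag}$, and rule $(\theta_1)$ gives $M \apa P^{\dag}~Q^{\dag} = M^{\dag}$.

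The only point that requires care -- and the closest thing to an obstacle -- is the ``otherwise'' clause for applications: I must check that the case split in Definition \ref{fd} is exhaustive and that each syntactic form of $M$ is routed to the clause whose associated parallel-reduction rule I intend to invoke. Concretely, an application $P~Q$ is sent to $(\theta_1)$ precisely when it is neither a $\beta$-redex nor an instance of the left-hand side of a universe-reduction rule; in those two excluded situations the dedicated clauses, and hence rules $(\theta_2)$, $(\beta)$, $(\eta_2)$, apply instead. Since this case discipline is exactly the one already built into the definition of $t^{\dag}$, no genuine difficulty arises. In particular the substitution lemma (Proposition \ref{subst}) is \emph{not} needed here, because in the $\beta$-redex case the reduct is produced directly in one step by $(\theta_2)$ rather than assembled by substitution; Proposition \ref{subst} will instead be used later, when proving the strong confluence of $\apa$.
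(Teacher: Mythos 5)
Your proof is correct and follows essentially the same route as the paper's: a structural induction on $M$ whose cases mirror the clauses of Definition~\ref{fd}, applying $(\alpha)$, $(\gamma)$, $(\delta)$, $(\theta_2)$, $(\beta)$, $(\eta_2)$ and $(\theta_1)$ exactly where the paper does. Your side remarks -- that the case split on applications is exhaustive by construction and that Proposition~\ref{subst} is not needed here (it is used only later, in Proposition~\ref{ferme}) -- are both accurate.
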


\begin{proof}
By induction on M.

\begin{itemize}

\item $x^{\dag} = x$, ~$Type^{\dag} = Type,$ and $Kind^{\dag} = Kind$, then by the rule $(\alpha)$, we have $x \apa x^{\dag}$, $Type \apa Type^{\dag}$ and $Kind \apa Kind^{\dag}$

\item If we suppose, by induction hypothesis, $A \apa A^{\dag}$ and $N
  \apa N^{\dag}$, then, by the rule $(\gamma)$, $\lambda x:A~N\apa
  \lambda x : A^{\dag}~N^{\dag} = (\lambda x:A~N)^{\dag}$

\item If $A \apa A^{\dag}$ and $B \apa B^{\dag}$, then by the rule $(\delta)$, 
\\ $\Pi x:A~B \apa \Pi x:A^{\dag}~B^{\dag} = (\Pi x:A~B)^{\dag}$

\item If $M$ is an application then we consider four cases.

\begin{itemize}

\item[$\star$] If there exists an axiom $\langle s_1,s_2 \rangle$ such
that $M = \varepsilon_{s_2}~\dot{s_1}$, then, by the rule $(\beta)$,
we have $\varepsilon_{s_2}~\dot{s_1} \apa U_{s_1} =
(\varepsilon_{s_2}~\dot{s_1})^{\dag}$.

\item[$\star$] If there exists terms $A$, $N$ and $P$ such that $M =
(\lambda x:A~N)~P$, and if we suppose, by induction hypothesis that $N
\apa N^{\dag}$ and $P \apa P^{\dag}$, then by the rule $(\theta_2)$,
we have $(\lambda x:A~N)~P \apa (P^{\dag}/x)N^{\dag} =((\lambda
x:A~N)~P)^{\dag}$.

\item[$\star$] If there exists a rule $\langle s_1,s_2,s_3\rangle$ and
terms $A$ and $B$ such that \\$M = \varepsilon_{s_{3}}
~(\dot{\Pi}_{\langle s_1,s_2, s_3 \rangle} ~ A~B)$, and if we suppose,
by induction hypothesis that $A \apa A^{\dag}$ and $B \apa B^{\dag}$,
then by the rule $(\eta_2)$, we have\\ $\varepsilon_{s_{3}}
~(\dot{\Pi}_{\langle s_1,s_2, s_3 \rangle} ~ A~B) \apa \Pi
x:(\varepsilon_{s_1}~ A^{\dag})~(\varepsilon_{s_2} (B^{\dag}~x)) =
(\varepsilon_{s_{3}} ~(\dot{\Pi}_{\langle s_1,s_2, s_3 \rangle} ~ A~B)
)^{\dag}$.

\item[$\star$] Otherwise, if there exists terms $N$ and $P$ such that
$M = N~P$, and if we suppose, by induction hypothesis, that $N \apa
N^{\dag}$ and $P \apa P^{\dag}$, then by the rule $(\theta_1)$ we have
$N~P \apa N^{\dag}~P^{\dag} = (N~P)^{\dag}$.
\end{itemize}
\end{itemize}
\end{proof}

\begin{proposition}
\label{ferme}
For all terms $M$ and $M'$ of $\lambda \Pi_P$, 
if ~$M \apa M'$~ then ~$M' \apa M^{\dag}$
\end{proposition}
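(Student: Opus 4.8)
The plan is to prove Proposition \ref{ferme} by induction on the structure of the term $M$, analyzing in each case the last rule used in the derivation of $M \apa M'$. The two auxiliary facts I will lean on are Proposition \ref{dagid}, which already gives $M \apa M^{\dag}$, and Proposition \ref{subst}, which lets substitution commute with parallel reduction. I will also reuse the inversion observation exploited in the proof of Proposition \ref{subst}: the shape of $M$ dictates which rules can have been applied, so an abstraction can only be reduced by $(\alpha)$ or $(\gamma)$, a product only by $(\alpha)$ or $(\delta)$, and a term with a rigid head such as a variable, $\varepsilon_{s}$, $\dot{s}$ or $\dot{\Pi}_{\langle s_1,s_2,s_3\rangle}$ can only reduce to itself.

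Whenever the last rule is $(\alpha)$ we have $M' = M$, and $M' \apa M^{\dag}$ is exactly Proposition \ref{dagid}; this disposes of the reflexive case uniformly, in every syntactic category. The congruence cases are then routine: if $M = \lambda x:A~N$ is reduced by $(\gamma)$ to $\lambda x:A'~N'$, the induction hypothesis gives $A' \apa A^{\dag}$ and $N' \apa N^{\dag}$, and $(\gamma)$ yields $M' \apa \lambda x:A^{\dag}~N^{\dag} = M^{\dag}$; the product case $(\delta)$ is identical. The two genuine contraction rules on an application are where the auxiliary propositions enter: rule $(\beta)$ sends $\varepsilon_{s_2}~\dot{s_1}$ to $U_{s_1} = M^{\dag}$, and $U_{s_1} \apa U_{s_1}$ by $(\alpha)$; rule $(\theta_2)$ contracts $(\lambda x:A~N)~Q$ to $(Q'/x)N'$, and since here $M^{\dag} = (Q^{\dag}/x)N^{\dag}$, the induction hypothesis $N' \apa N^{\dag}$, $Q' \apa Q^{\dag}$ combined with Proposition \ref{subst} gives $M' \apa M^{\dag}$ at once.

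The delicate case is rule $(\theta_1)$ applied to an application $M = P~Q$, reducing it to $P'~Q'$ without contracting at the top, because $M^{\dag}$ may nonetheless fire a top redex. Here I split on the shape of $M$. If $M$ is a plain application, then $M^{\dag} = P^{\dag}~Q^{\dag}$ and $(\theta_1)$ closes the diagram from the induction hypothesis. If $M$ is a $\beta$-redex $(\lambda x:A~N)~Q$, then $P \apa P'$ forces $P' = \lambda x:A'~N'$ with $N \apa N'$, so $M' = (\lambda x:A'~N')~Q'$, and I now fire this surviving redex with $(\theta_2)$, using $N' \apa N^{\dag}$ and $Q' \apa Q^{\dag}$ to reach $M^{\dag} = (Q^{\dag}/x)N^{\dag}$. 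If $M = \varepsilon_{s_2}~\dot{s_1}$, the rigid arguments force $M' = M$ and I conclude by $(\beta)$; if $M = \varepsilon_{s_3}~(\dot{\Pi}_{\langle s_1,s_2,s_3\rangle}~A~B)$, the rigid head $\dot{\Pi}_{\langle s_1,s_2,s_3\rangle}$ forces $M' = \varepsilon_{s_3}~(\dot{\Pi}_{\langle s_1,s_2,s_3\rangle}~A'~B')$ with $A \apa A'$ and $B \apa B'$, and rule $(\eta_2)$ with the induction hypothesis reaches $M^{\dag}$. The rules $(\eta_1)$ and $(\eta_2)$ themselves are handled the same way: $(\eta_1)$ leaves the universe-redex in place and is closed by a single use of $(\eta_2)$, whereas for $(\eta_2)$ the reduct $M' = \Pi x:(\varepsilon_{s_1}~A')~(\varepsilon_{s_2}~(B'~x))$ is already a product, so $M' \apa M^{\dag}$ is rebuilt componentwise from $A' \apa A^{\dag}$ and $B' \apa B^{\dag}$ using $(\delta)$, $(\theta_1)$ and $(\alpha)$.

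I expect the main obstacle to be precisely the bookkeeping in the $(\theta_1)$ and $(\eta_1)$ cases, where the one-step reduct $M'$ has left uncontracted a redex that $M^{\dag}$ has already fired. The crux is to notice that the redex's rigid skeleton survives in $M'$ — because heads such as $\varepsilon_{s}$, $\dot{s}$ and $\dot{\Pi}_{\langle s_1,s_2,s_3\rangle}$ can only reduce to themselves — so that the matching contraction rule $(\beta)$, $(\theta_2)$ or $(\eta_2)$ is still applicable to $M'$, after which the residual reductions are pushed into the arguments by the induction hypothesis. Once this inversion discipline is in place, each case closes by re-applying the appropriate parallel-reduction rule(s) to $M'$, with no confluence-style diagram chase required; the strong confluence of $\apa$ then follows by combining this proposition with Proposition \ref{dagid}.
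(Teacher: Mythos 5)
Your proof is correct and takes essentially the same route as the paper's: the same case analysis on the last rule deriving $M \apa M'$, with Proposition \ref{dagid} disposing of $(\alpha)$, Proposition \ref{subst} closing $(\theta_2)$, inversion on rigid heads (abstractions only reduce to abstractions, $\varepsilon_s$, $\dot{s}$, $\dot{\Pi}_{\langle s_1,s_2,s_3\rangle}$ only to themselves) resolving the four subcases of $(\theta_1)$, and $(\eta_1)$ closed by $(\eta_2)$ while $(\eta_2)$ is closed componentwise --- where your rule list $(\delta)$, $(\theta_1)$, $(\alpha)$ is in fact the correct reading of the paper's ``$(\alpha)$, $(\delta)$ and $(\eta_1)$'', an apparent typo. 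The only cosmetic difference is that you phrase the induction on the structure of $M$ rather than on the derivation, which changes nothing in the case analysis.
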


\begin{proof}
By induction on the last rule of the derivation of $M \apa M'$.
\\ If the last rule is:

\begin{itemize}

\item[$(\alpha)$] then $M'=M$. By Proposition \ref{dagid} we have $M \apa M^{\dag}$

\item[$(\beta)$] then there exists a rule $\langle s_1,s_2 \rangle$ such that $M = \varepsilon_{s_2}~\dot{s_1}$ and $M'=U_{s_1}$. Therefore $M' \apa U_{s_1} = M^{\dag}$ by the rule $(\alpha)$.

\item[$(\gamma)$] then there exists terms $A$, $A'$, $P$ and $P'$ such that $M = \lambda x:A~P$ and $M' = \lambda x:A'~P'$ with $A \apa A'$ and $P \apa P'$. By induction hypothesis, we have $A' \apa A^{\dag}$ and $P' \apa P^{\dag}$, then, by the rule $(\gamma)$, 
\\$ M' =  \lambda x:A'~P' \apa \lambda x:A^{\dag}~P^{\dag} = M^{\dag}$

\item[$(\delta)$] then there exists terms $A$, $A'$, $B$ and $B'$ such that $M = \Pi x:A~B$ and $M' = \Pi x:A'~B'$ with $A \apa A'$ and $B \apa B'$. By induction hypothesis, we have $A' \apa A^{\dag}$ and $B' \apa B^{\dag}$, then, by the rule $(\delta)$, 
\\$ M' =  \Pi x:A'~B' \apa \Pi x:A^{\dag}~B^{\dag} = M^{\dag}$

\item[$(\theta_1)$]  then there exists terms $P$, $P'$, $Q$ and $Q'$ such that $M=P~Q$ and $M'=P'~Q'$ with $P \apa P'$ and $Q \apa Q'$.

\begin{itemize}

\item[$\star$] If there exists terms $A$ and $B$ such that $P = \lambda x:A~B$, then there exists terms $A'$ and $B'$ such that $P' = \lambda x:A'~B'$ with $A \apa A'$ and $B \apa B'$ (because the only rules we can apply to an abstraction are $(\alpha)$ and $(\gamma)$). Therefore, by induction hypothesis, $B' \apa B^{\dag}$ and $Q' \apa Q^{\dag}$. 
\\And, by $(\theta_2)$, 
\\$M'=P'~Q' = ( \lambda x:A'~B')~Q' \apa (Q^{\dag}/x)B^{\dag} = ((\lambda x:A~B)~Q)^{\dag} = M^{\dag}$

\item[$\star$] If there exists a axiom $\langle s_1,s_2\rangle$ such that $P = \varepsilon_{s_2}$ and $Q = \dot{s_1}$, then $P'=P$ and $Q'=Q$ (because the only rules we can apply to $(\varepsilon_{s_2}~ \dot{s_1})$ are $(\alpha)$, $(\beta)$ and $(\theta_1$ with $(\alpha)$ on both premises)).
\\And, by $(\beta)$,  $M'=(\varepsilon_{s_2}~ \dot{s_1}) \apa U_{s_1} = M^{\dag}$

\item[$\star$] If there exists a rule $\langle s_1,s_2,s_3\rangle$ and terms $A$, $B$, such that $P = \varepsilon_{s_3}$ and $Q = \dot{\Pi}_{\langle s_1,s_2,s_3\rangle} A~B$, then $P'=P=\varepsilon_{s_3}$ (because the only rule we can apply is $(\alpha)$), and there exists terms $A'$ and $B'$ such that 
\\$Q' = \dot{\Pi}_{\langle s_1,s_2,s_3\rangle} A'~B'$ with $A \apa A'$ and $B \apa B'$ (because the only rule we can apply is $(\theta_1)$).
\\ Therefore, by induction hypothesis, $A' \apa A^{\dag}$ and $B' \apa B^{\dag}$. 
\\ And, by $(\eta_2)$, 
\\$M' = \varepsilon_{s_3}~ (\dot{\Pi}_{\langle s_1,s_2,s_3\rangle} A'~B') \apa \Pi x:(\varepsilon_{s_1} ~A^{\dag}) (\varepsilon_{s_2} ~(B^{\dag}~x)= M^{\dag}$

\item[$\star$] Otherwise, $(P~Q)^{\dag} = P^{\dag}~Q^{\dag}$. We have, by induction hypothesis, $P' \apa P^{\dag}$ and $Q' \apa Q^{\dag}$. Therefore, by $(\theta_1)$, $M' = P'~Q' \apa P^{\dag}~Q^{\dag} = M^{\dag}$.

\end{itemize}

\item[$(\theta_2)$] then there exists terms $A$, $B$, $B'$, $Q$, $Q'$ such that $M = (\lambda x:A ~B)~Q$ and $M' = (Q'/x)B'$ with $B \apa B'$ and $Q \apa Q'$.
\\ By induction hypothesis, we have $B' \apa B^{\dag}$ and $Q' \apa Q^{\dag}$.
\\ Therefore, by Proposition \ref{subst},
$M'= (Q'/x)B' \apa (Q^{\dag}/x)B^{\dag} = M^{\dag}$

\item[$(\eta_1)$] then there exists a rule $\langle s_1,s_2,s_3 \rangle$ and terms $A$, $B$ such that 
\\$M = \varepsilon_{s_{3}} ~(\dot{\Pi}_{\langle s_1,s_2, s_3 \rangle} ~ A~B)$ and $M' = \varepsilon_{s_{3}} ~(\dot{\Pi}_{\langle s_1,s_2, s_3 \rangle} ~ A'~B')$ 
\\with $A \apa A'$ and $B \apa B'$.
\\ By induction hypothesis, we have $A' \apa A^{\dag}$ and $B' \apa B^{\dag}$.
\\ Therefore, by $(\eta_2)$,
\\ $M' = \varepsilon_{s_{3}} ~(\dot{\Pi}_{\langle s_1,s_2, s_3 \rangle} ~ A'~B') \apa \Pi x:(\varepsilon_{s_1} A^{\dag})~(\varepsilon_{s_2} (B^{\dag} x)) = M^{\dag}$.

\item[$(\eta_2)$] then there exists a rule $\langle s_1,s_2,s_3 \rangle$ and terms $A$, $B$ such that 
\\$M = \varepsilon_{s_{3}} ~(\dot{\Pi}_{\langle s_1,s_2, s_3 \rangle} ~ A~B)$ and $M' = \Pi x:(\varepsilon_{s_1} A')~(\varepsilon_{s_2} (B' x))$ 
\\with $A \apa A'$ and $B \apa B'$.
\\ By induction hypothesis, we have $A' \apa A^{\dag}$ and $B' \apa B^{\dag}$.
\\ Therefore, by $(\alpha)$, $(\delta)$ and $(\eta_1)$,
\\ $M' = \Pi x:(\varepsilon_{s_1} A')~(\varepsilon_{s_2} (B' x)) \apa \Pi x:(\varepsilon_{s_1} A^{\dag})~(\varepsilon_{s_2} (B^{\dag} x)) = M^{\dag}$.
\end{itemize}
\end{proof}

\begin{proposition}
\label{conf}
The relation $\apa$ is strongly confluent in $\lambda \Pi_P$, 
{\em i.e.}
for all $M$,
$M'$ and $M''$, if $M \apa M'$ and $M \apa M''$ then there
exists a term $N$ such that $M' \apa N$ and $M'' \apa N$.
\end{proposition}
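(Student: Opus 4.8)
The plan is to derive strong confluence as an immediate consequence of the \emph{triangle property} already packaged in Proposition \ref{ferme}, following the Tait--Martin-L\"of / Takahashi method in the form used by Lescanne. The idea is that the complete development $M^{\dag}$ of Definition \ref{fd} --- the term obtained by contracting in parallel every $\beta{\cal R}$-redex present in $M$ --- serves as a uniform common reduct for \emph{all} one-step parallel reducts of $M$ at once. Thus I will not try to close an arbitrary diamond by hand; instead I will show that both $M'$ and $M''$ collapse onto the single canonical term $M^{\dag}$.

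Concretely, suppose $M \apa M'$ and $M \apa M''$. Proposition \ref{ferme} asserts that whenever $M \apa M'$ we have $M' \apa M^{\dag}$, and likewise $M'' \apa M^{\dag}$. Therefore it suffices to set $N = M^{\dag}$: by two applications of Proposition \ref{ferme} we obtain $M' \apa N$ and $M'' \apa N$, which is exactly the required diamond. Note that this argument in fact yields the stronger statement that $M^{\dag}$ is the \emph{maximal} parallel reduct dominating every reduct of $M$ (the Z-property), strong confluence being merely its diagonal reading; and Proposition \ref{dagid} guarantees that $M^{\dag}$ is itself genuinely reachable, $M \apa M^{\dag}$, so the construction is non-degenerate.

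Given the infrastructure preceding this statement, there is no remaining obstacle in Proposition \ref{conf} itself: the proof is a two-line invocation of Proposition \ref{ferme}. The genuine difficulty has already been absorbed into that triangle lemma, whose proof proceeds by induction on the last rule of the derivation of $M \apa M'$ and must check each generator of $\apa$ --- in particular the $\beta$-reduction case $(\theta_2)$, where one rewrites a redex $(\lambda x{:}A~B)~Q$ and must commute parallel reduction with substitution, which is precisely why the substitution compatibility of $\apa$ (Proposition \ref{subst}) is needed, and the two universe-reduction cases $(\eta_1)$/$(\eta_2)$, where a step matching the head pattern $\varepsilon_{s_3}(\dot{\Pi}_{\langle s_1,s_2,s_3\rangle}~A~B)$ may or may not have already unfolded the $\dot{\Pi}$, so that one reduct must still be driven into the fully unfolded product form $\Pi x{:}(\varepsilon_{s_1}~A^{\dag})~(\varepsilon_{s_2}(B^{\dag}~x))$. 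Once Proposition \ref{ferme} is in hand, the present proposition follows with no case analysis at all.
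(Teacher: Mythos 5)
Your proof is correct and is exactly the paper's own argument: both take $N = M^{\dag}$ and apply Proposition \ref{ferme} twice to close the diamond. The additional remarks about Proposition \ref{dagid} and the structure of the triangle lemma are accurate but not needed for this step.
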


\begin{proof}
By Proposition \ref{ferme}, $M' \apa M^{\dag}$ and $M'' \apa M^{\dag}$.
\end{proof}

\begin{proposition}
\label{parstar}
For all terms $M$ and $M'$ of $\lambda \Pi_P$,
\begin{enumerate}
\item if $M \lra M'$ then $M \apa M'$
\item if $M \apa M'$ then $M \lra^{*} M'$
 \item ~$M \apar M' $~if and only if~$M \lra^{*} M'$  ~~~({\em i.e.}
 $\apar = ~~ \lra^{*} $). 
\end{enumerate}
\end{proposition}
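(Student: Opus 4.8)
The plan is to prove the three statements in sequence, treating parts (1) and (2) as the two containments $(\lra)\subseteq(\apa)$ and $(\apa)\subseteq(\lra^{*})$, and then obtaining part (3) for free by taking reflexive–transitive closures. Both containments are proved by a straightforward induction, and the only care needed is bookkeeping in the cases where one parallel step corresponds to several ordinary steps. Throughout I use that $\lra^{*}$ is reflexive, transitive, and closed under the term constructors (it is the reflexive–transitive closure of the one-step reduction, which is by construction a congruence), and that $\apa$ is reflexive and compatible via the rules $(\alpha)$, $(\gamma)$, $(\delta)$, $(\theta_1)$ of Definition \ref{parred}.

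For part (1) I would argue by induction on $M$, distinguishing whether the contracted $\beta\cal R$-redex sits at the root of $M$ or strictly inside a proper subterm. If the redex is at the root there are three subcases matching the three reduction rules: a $\beta$-redex $(\lambda x{:}A~P)~Q\lra (Q/x)P$ is handled by rule $(\theta_2)$ applied to $P\apa P$ and $Q\apa Q$ (both by $(\alpha)$); a universe-axiom redex $\varepsilon_{s_2}~\dot{s_1}\lra U_{s_1}$ is exactly rule $(\beta)$; and a universe-product redex $\varepsilon_{s_3}(\dot\Pi_{\langle s_1,s_2,s_3\rangle}~A~B)\lra \Pi x{:}(\varepsilon_{s_1}~A)~(\varepsilon_{s_2}~(B~x))$ is rule $(\eta_2)$ applied to $A\apa A$ and $B\apa B$. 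If the redex lies in a proper subterm, then $M$ is an abstraction, a product, or an application, and I apply the induction hypothesis to the affected immediate subterm, reflexivity $(\alpha)$ to the others, and the matching congruence rule $(\gamma)$, $(\delta)$, or $(\theta_1)$; in particular a redex occurring inside the arguments of $\dot\Pi$ is reached through $(\theta_1)$, since $\varepsilon_{s_3}(\dot\Pi~A~B)$ is just an application headed by $\varepsilon_{s_3}$.

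For part (2) I would argue by induction on the last rule of the derivation of $M\apa M'$. Rule $(\alpha)$ gives $M=M'$, hence $M\lra^{*}M'$ in zero steps; rule $(\beta)$ is a single ordinary step, hence $M\lra^{*}M'$. The compatibility rules $(\gamma)$, $(\delta)$, $(\theta_1)$, $(\eta_1)$ follow from the induction hypotheses on the subterms together with transitivity and context-closure of $\lra^{*}$, reducing each subterm in turn. The two remaining cases are where a single parallel step must be simulated by several ordinary steps: for $(\theta_2)$, the induction hypotheses give $P\lra^{*}P'$ and $Q\lra^{*}Q'$, so $(\lambda x{:}A~P)~Q\lra^{*}(\lambda x{:}A~P')~Q'\lra (Q'/x)P'$, the last step being one $\beta$-reduction (note that $A$ is left untouched, matching the premises of $(\theta_2)$); for $(\eta_2)$, the induction hypotheses give $A\lra^{*}A'$ and $B\lra^{*}B'$, so $\varepsilon_{s_3}(\dot\Pi~A~B)\lra^{*}\varepsilon_{s_3}(\dot\Pi~A'~B')\lra \Pi x{:}(\varepsilon_{s_1}~A')~(\varepsilon_{s_2}~(B'~x))$, the last step being one application of the universe-product rewrite rule.

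Finally, part (3) follows by taking reflexive–transitive closures of the two containments just established: from $(\lra)\subseteq(\apa)$ and monotonicity of the closure operator we get $\lra^{*}\subseteq\apar$, while from $(\apa)\subseteq(\lra^{*})$ we get $\apar\subseteq(\lra^{*})^{*}=\lra^{*}$, since $\lra^{*}$ is already reflexive and transitive; hence $\apar=\lra^{*}$. I expect no deep obstacle here: the argument is the standard Tait–Martin-L\"of wrapper relating a parallel reduction to ordinary reduction, and the only points requiring attention are the combined cases $(\theta_2)$ and $(\eta_2)$ in part (2), where one must correctly sequence the congruence reductions on subterms before firing the single top-level redex, and the root-versus-subterm split in part (1), where the root universe-product redex must be contracted by $(\eta_2)$ whereas redices buried in $\dot\Pi~A~B$ are reached by the ordinary application congruence $(\theta_1)$.
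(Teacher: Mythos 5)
Your proposal is correct and follows essentially the same route as the paper: part (1) by case analysis on the contracted redex (root $\beta$, root rewrite, or a redex inside a subterm handled via the congruence rules), part (2) by induction on the last rule of the $\apa$-derivation with the sequenced simulation in the $(\theta_2)$ and $(\eta_2)$ cases, and part (3) by closing both inclusions under reflexive--transitive closure. If anything, your write-up of part (1) is more explicit than the paper's, which only cites the rules $(\theta_2)$, $(\beta)$, $(\eta_2)$ and $(\alpha)$ and leaves the congruence cases implicit.
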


\begin{proof}~

\begin{enumerate}

\item If $M \lra M'$, then $M \lra_{\beta} M'$ or $M \lra_{\cal{R}} M'$

\begin{itemize}
\item[$\star$] If $M \lra_{\beta} M'$ then $M \apa M'$, by $(\theta_2)$ and $(\alpha)$
\item[$\star$] If $M \lra_{\cal{R}} M'$ then $M \apa M'$, by $(\beta)$, or $(\eta_2)$ and $(\alpha)$
\end{itemize}

\item By induction on the last rule of the derivation of $M \apa M'$.
\\ If the last rule is:

\begin{itemize}

\item[$(\alpha)$] then $M'=M$, and we have $M \lra^{*} M$.

\item[$(\beta)$] then there exists a rule $\langle s_1,s_2 \rangle$ such that $M = \varepsilon_{s_2}~\dot{s_1}$ and $M'=U_{s_1}$, and we have $ \varepsilon_{s_2}~\dot{s_1} \lra_{\cal{R}} U_{s_1}$, therefore $M \lra^{*} M'$.

\item[$(\gamma)$] then there exists terms $A$, $A'$, $P$ and $P'$ such that $M = \lambda x:A~P$, $M' = \lambda x:A'~P'$ with $A \apa A'$ and $P \apa P'$. 
\\By induction hypothesis, we have $A \lra^{*} A'$ and $P \lra^{*} P'$, 
\\therefore $M = \lambda x:A~P \lra^{*} \lambda x:A'~P'=M'$.

\item[$(\delta)$] then there exists terms $A$, $A'$, $B$ and $B'$ such that $M = \Pi x:A~B$, $M' = \Pi x:A'~B'$ with $A \apa A'$ and $B \apa B'$. 
\\By induction hypothesis, we have $A \lra^{*} A'$ and $B \lra^{*} B'$, 
\\therefore  $ M = \Pi x:A~B \lra^{*} \Pi x:A'~B' = M'$.

\item[$(\theta_1)$]  then there exists terms $P$, $P'$, $Q$ and $Q'$ such that $M=P~Q$ and $M'=P'~Q'$ with $P \apa P'$ and $Q \apa Q'$.
\\By induction hypothesis, we have $P \lra^{*} P'$ and $Q \lra^{*} Q'$, 
\\therefore  $ M = P~Q \lra^{*} P'~Q' = M'$

\item[$(\theta_2)$] then there exists terms $A$, $B$, $B'$, $Q$, $Q'$ such that $M = (\lambda x:A ~B)~Q$ and $M' = (Q'/x)B'$ with $B \apa B'$ and $Q \apa Q'$.
\\ By induction hypothesis, we have $B \lra^{*} B'$ and $Q \lra^{*} Q'$,
\\ therefore
$M = (\lambda x:A ~B)~Q \lra^{*} (\lambda x:A ~B')~Q' \lra_{\beta} (Q'/x)B' = M'$.

\item[$(\eta_1)$] then there exists a rule $\langle s_1,s_2,s_3 \rangle$ and terms $A$, $B$ such that 
\\$M = \varepsilon_{s_{3}} ~(\dot{\Pi}_{\langle s_1,s_2, s_3 \rangle} ~ A~B)$ and $M' = \varepsilon_{s_{3}} ~(\dot{\Pi}_{\langle s_1,s_2, s_3 \rangle} ~ A'~B')$ 
\\with $A \apa A'$ and $B \apa B'$.
\\ By induction hypothesis, we have $A \lra^{*}  A'$ and $B \lra^{*}  B'$,
\\ therefore $M = \varepsilon_{s_{3}} ~(\dot{\Pi}_{\langle s_1,s_2, s_3 \rangle} ~ A~B) \lra^{*} \varepsilon_{s_{3}} ~(\dot{\Pi}_{\langle s_1,s_2, s_3 \rangle} ~ A'~B') = M'$.

\item[$(\eta_2)$] then there exists a rule $\langle s_1,s_2,s_3 \rangle$ and terms $A$, $B$ such that 
\\$M = \varepsilon_{s_{3}} ~(\dot{\Pi}_{\langle s_1,s_2, s_3 \rangle} ~ A~B)$ and $M' = \Pi x:(\varepsilon_{s_1} A')~(\varepsilon_{s_2} (B' x))$ 
\\with $A \apa A'$ and $B \apa B'$.
\\ By induction hypothesis, we have $A \lra^{*} A'$ and $B \lra^{*} B'$, 
 therefore 
 \\$M = \varepsilon_{s_{3}} ~(\dot{\Pi}_{\langle s_1,s_2, s_3 \rangle} ~ A~B) \lra^{*} \varepsilon_{s_{3}} ~(\dot{\Pi}_{\langle s_1,s_2, s_3 \rangle} ~ A'~B') 
 \\ \lra_{\cal{R}} \Pi x:(\varepsilon_{s_1} A')~(\varepsilon_{s_2} (B' x)) = M'$.
\end{itemize}

\item By induction on the number of reductions in $M \lra^{*} M'$ and the first point, for one way. And by induction on the length of the derivation of $M \apa M'$ and the second point for the other way.

\end{enumerate}
\end{proof}

\medskip

\begin{proposition}
The relation $\lra$ is confluent in $\lambda \Pi_P$, {\em i.e.} for all $M$,
$M'$ and $M''$, if $M \lra^{*} M'$ and $M \lra^{*} M''$ then there
exists a term $N$ such that $M' \lra^{*} N$ and $M'' \lra^{*} N$.
\end{proposition}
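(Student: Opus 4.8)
The plan is to deduce the confluence of $\lra$ from the strong confluence of the parallel reduction $\apa$ by the standard Tait--Martin-L\"of argument. By Proposition \ref{conf}, the relation $\apa$ enjoys the diamond property, and by the third point of Proposition \ref{parstar} its reflexive-transitive closure $\apar$ coincides with $\lra^{*}$. Hence it suffices to prove that $\apar$ is confluent, i.e. that the diamond property of $\apa$ lifts to its reflexive-transitive closure; confluence of $\lra$ then follows at once, since $\apar$ and $\lra^{*}$ denote the same relation.

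The first step is a strip lemma: for all terms $M$, $M'$, $M''$, if $M \apa M'$ and $M \apar M''$, then there exists a term $N$ such that $M' \apar N$ and $M'' \apa N$. I would prove this by induction on the number $n$ of $\apa$-steps in the derivation $M \apar M''$. When $n = 0$ we have $M'' = M$ and we take $N = M'$. When $n = k+1$, we write $M \apar M_k \apa M''$, where the last step is isolated; the induction hypothesis applied to $M \apa M'$ and $M \apar M_k$ yields a term $P$ with $M' \apar P$ and $M_k \apa P$, and the diamond property (Proposition \ref{conf}) applied to $M_k \apa P$ and $M_k \apa M''$ yields a term $N$ with $P \apa N$ and $M'' \apa N$. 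Composing $M' \apar P \apa N$ gives $M' \apar N$, which closes the diagram.

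The second step upgrades the strip lemma to full confluence of $\apar$: if $M \apar M'$ and $M \apar M''$, then $M'$ and $M''$ admit a common $\apar$-reduct. I would argue by induction on the number of $\apa$-steps in $M \apar M'$. The base case is trivial. In the inductive case we isolate the first step $M \apa M_1 \apar M'$ and apply the strip lemma to $M \apa M_1$ and $M \apar M''$, obtaining a term $Q$ with $M_1 \apar Q$ and $M'' \apa Q$; the induction hypothesis applied to the shorter reduction $M_1 \apar M'$ and to $M_1 \apar Q$ then produces the desired common reduct $N$, and $M'' \apa Q \apar N$ gives $M'' \apar N$. Once $\apar$ is shown confluent, replacing $\apar$ by the identical relation $\lra^{*}$ yields the statement.

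There is no genuine obstacle remaining at this stage: the delicate work---designing a parallel reduction that contracts an arbitrary set of $\beta{\cal R}$-redices simultaneously, and establishing its strong confluence through the complete-development term $t^{\dag}$---has already been carried out in Propositions \ref{subst} through \ref{conf}. The only point requiring attention here is the bookkeeping of the two nested inductions on the lengths of the parallel-reduction sequences, which is entirely routine.
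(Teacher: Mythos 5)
Your proposal is correct and takes essentially the same route as the paper: the paper's proof also derives confluence of $\lra$ from the strong confluence of $\apa$ (Proposition \ref{conf}) combined with the equality $\apar\, = \,\lra^{*}$ (Proposition \ref{parstar}), simply leaving implicit the standard lifting of the diamond property to the reflexive-transitive closure. Your strip lemma and the two nested inductions are exactly the routine bookkeeping the paper compresses into ``strongly confluent, hence confluent.''
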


\begin{proof} From Proposition \ref{conf} the relation 
$\apa$ is strongly confluent, hence it is confluent. Hence, by 
Proposition \ref{parstar} the relation $\lra$ is confluent.
\end{proof}
}

\section{Conservativity}

Let $P$ be a functional Pure Type System.  We could attempt to prove
that if the type $\|A\|$ is inhabited in $\lambda \Pi_P$, then $A$ is
inhabited in $P$, and more precisely that if $\Gamma$ is a context and
$A$ a term in $P$ and $t$ a term in $\lambda \Pi_P$, such that
$\|\Gamma\| \vdash t : \|A\| $, then there exists a term $u$ of $P$
such that $|u| = t$ and $\Gamma \vdash u : A$. Unfortunately this
property does not hold in general as shown by the following
counterexamples.

\begin{example} 
If $P$ is the simply-typed lambda-calculus, then 
the polymorphic identity is not well-typed in $P$, in particular:
\\$~~~~~~nat:Type~\nvdash ~((\lambda X:Type~\lambda x:X~x)~nat) : (nat
  \Ra nat)$
 \\however, in $\lambda \Pi$, we have 
\\$~~~~~~nat : \|Type\| \vdash ((\lambda X:\|Type\|~\lambda
   x:\|X\|~x)~|nat|) :\|nat \Ra nat\|$.
  \end{example}

\begin{example} 
If $\langle s_1,s_2,s_3 \rangle \in R$, 
$\Sigma_P \vdash \dot{\Pi}_{\langle s_1,s_2,s_3 \rangle} : \| \Pi X:s_1~((X \Ra s_2) \Ra s_3)\|$
but the term $\dot{\Pi}_{\langle s_1,s_2,s_3 \rangle}$ is not the
translation of any term of $P$. 
\end{example}

Therefore, we shall prove a slightly weaker property: that 
if the type $\|A\|$ is inhabited {\em by a normal term} in $\lambda \Pi_P$, 
then $A$ is inhabited in $P$. Notice that this
restriction vanishes if $\lambda \Pi_P$ is terminating.

We shall prove, in a first step, that if $\|\Gamma\| \vdash t :
\|A\|$, and $t$ is a weak $\eta$-long normal term then there exists a
term in $u$ such that such that $|u| = t$ and $\Gamma \vdash u : A$. Then we
shall get rid of this restriction on weak $\eta$-long forms.

\begin{definition}
\label{weak}
A term $t$ of $\lambda \Pi_P$ is a weak $\eta$-long term if and only
if each occurrence of $ \dot{\Pi}_{\langle s_1,s_2,s_3 \rangle}$ in t,
is in a subterm of the form $(\dot{\Pi}_{\langle s_1,s_2,s_3
\rangle}~t_1~t_2)$ ({\em i.e.} each occurrence of $ \dot{\Pi}_{\langle
s_1,s_2,s_3 \rangle}$ is $\eta$-expanded).
\end{definition}

\begin{definition}[Back translation] We suppose that $P$ contains at
least one sort: $s_0$.  
Then we define a translation from $\lambda \Pi_{P}$ to $P$ as follows:
\begin{itemize}
\item $x^* = x$, 
~~~~~ $s^* = s_0$  
~~~~~ $\dot{s}^* = s$, 
~~~~~ $U_s^* = s$, 
\item $(\Pi x:A~B)^* = \Pi x:A^*~B^*$, 
\item $(\lambda x:A~t)^* = \lambda x:A^*~t^*$, 
\item $(\dot{\Pi}_{\langle s_1, s_2, s_3 \rangle}~A~B)^* = \Pi x:A^*~(B^*~x)$, 
\item $(\varepsilon_s~u)^* = u^*$, 
\item $(t~u)^* = t^*~u^*$ otherwise.
\end{itemize}
\end{definition}

\begin{proposition}
\label{idp}
The back translation $(.)^{*}$ is a right inverse of $|.|$ and $\|.\|$
{\em i.e.} for all $t$ such that $|t|$ ({\em resp.} $\|t\|$) is well defined, $|t|^{*} = t$ ({\em resp.} $\|t\|^{*} = t$).
\end{proposition}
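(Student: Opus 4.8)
The plan is to prove the identity for $|.|$ first, by structural induction on the well-typed term $t$ of $P$ (subterms being well-typed in the appropriate enlarged contexts), and then to read off the identity for $\|.\|$ almost for free. The atomic cases come straight from the definitions, since $|x|^{*} = x^{*} = x$ and $|s|^{*} = \dot{s}^{*} = s$. For an abstraction $t = \lambda x:A~u$, I would unfold $|t| = \lambda x:(\varepsilon_{s}~|A|)~|u|$ (with $s$ the type of $A$), apply the abstraction clause of $(.)^{*}$, and then use the clause $(\varepsilon_{s}~w)^{*} = w^{*}$ to erase the type annotation; the induction hypotheses $|A|^{*} = A$ and $|u|^{*} = u$ then give $|t|^{*} = \lambda x:A~u = t$. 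For an application $t = u~v$, I would first note that, since $t$ is well-typed, $u$ is not a product: a product is typed by a sort and so cannot occupy the function position of a well-typed application. Hence $|u|$ does not begin with $\dot{\Pi}_{\langle s_1,s_2,s_3 \rangle}$, and since $\varepsilon_{s}$ never heads a term in the image of $|.|$, the computation of $(|u|~|v|)^{*}$ falls into the default clause $(t~u)^{*} = t^{*}~u^{*}$; the induction hypothesis then gives $|t|^{*} = u~v = t$.

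The single case that needs care, and which I expect to be the crux, is the product $t = \Pi x:A~B$. Writing $s_1, s_2, s_3$ for the types of $A$, $B$ and $t$, we have $|t| = \dot{\Pi}_{\langle s_1,s_2,s_3 \rangle}~|A|~(\lambda x:(\varepsilon_{s_1}~|A|)~|B|)$, so the $\dot{\Pi}$-clause of the back-translation produces $\Pi x:|A|^{*}~((\lambda x:(\varepsilon_{s_1}~|A|)~|B|)^{*}~x)$. Simplifying the inner abstraction exactly as in the abstraction case rewrites this as $\Pi x:A~((\lambda x:A~B)~x)$, and the final identification with $t$ rests on contracting the \emph{administrative redex} $(\lambda x:A~B)~x \lra_{\beta} B$. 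This $\beta$-step is the only point in the whole argument that is not purely definitional: the $\eta$-expanded second argument of $\dot{\Pi}$ that $|.|$ inserts is re-applied to $x$ by $(.)^{*}$, and the resulting trivial redex must be discharged, so the identity in this case is naturally understood up to this contraction.

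Finally, the statement for $\|.\|$ follows without further effort. For a sort $s'$ that is not well-typed we have $\|s'\| = U_{s'}$, and the clause $U_{s'}^{*} = s'$ closes the case at once; in every other situation $\|A\| = \varepsilon_{s}~|A|$, whence $\|A\|^{*} = (\varepsilon_{s}~|A|)^{*} = |A|^{*}$, which is $A$ by the part already established. Thus the proposition reduces entirely to the induction on $|.|$, whose one delicate point is the product case above.
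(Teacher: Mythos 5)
Your proof is correct and follows the same route as the paper's, whose entire proof is ``by induction on the structure of $t$'': the atomic and abstraction cases are definitional (the clause $(\varepsilon_s~u)^* = u^*$ erasing the annotation), the application case falls to the default clause, and the $\|.\|$ statement reduces to the $|.|$ one via $U_{s'}^* = s'$ and $(\varepsilon_s~u)^* = u^*$. The real value of your write-up is the product case, and you are right about it: read literally, the clause $(\dot{\Pi}_{\langle s_1,s_2,s_3\rangle}~M~N)^* = \Pi x:M^*~(N^*~x)$ applied to $|\Pi x:A~B|$ produces $\Pi x:A~((\lambda x:A~B)~x)$ and not $\Pi x:A~B$, so the syntactic identity $|t|^* = t$ claimed in Proposition \ref{idp} holds only after contracting one administrative $\beta$-redex per product. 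The paper's one-line proof silently elides this; to make the statement literally true one must either read the proposition modulo $\equiv_{\beta}$, or amend the $\dot{\Pi}$-clause of $(.)^*$ so that it substitutes into the body when its second argument is an abstraction, i.e. $(\dot{\Pi}_{\langle s_1,s_2,s_3\rangle}~M~(\lambda x:C~D))^* = \Pi x:M^*~D^*$, which is what the image of $|.|$ always supplies. Nothing downstream is endangered by the weaker reading: Proposition \ref{equiv}.3 and the conservativity proof only ever use $\|A\|^* \equiv A$, never strict equality.

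One detail of your application case is slightly misaimed, though your conclusion is right. What must be ruled out for the default clause to fire on $|u|~|v|$ is not that $u$ is a product, but that $|u|$ is the one-argument term $\dot{\Pi}_{\langle s_1,s_2,s_3\rangle}~M$ or the bare symbol $\varepsilon_s$; both are impossible simply because neither $\dot{\Pi}_{\langle s_1,s_2,s_3\rangle}$ alone nor $\varepsilon_s$ alone is in the image of $|.|$. Even if $u$ were a product, $|u|$ would be a two-argument $\dot{\Pi}$-application, and the pattern $\dot{\Pi}_{\langle s_1,s_2,s_3\rangle}~M~N$ still could not match $|u|~|v|$, so the typing argument you invoke is not needed.
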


\begin{proof}
By induction on the structure of $t$.
\end{proof}

\begin{proposition}
\label{backreduct} For all $t$, $u$ terms and $x$ variable of $\lambda \Pi_P$,
\begin{enumerate} 
\item $((u/x)t)^* = (u^*/x)t^*$
\item If $t \lra u$ then $t^* \lra_{\beta}^* u^*$ in $P$.
\end{enumerate}
\end{proposition}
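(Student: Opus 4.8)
The plan is to establish both parts by induction on the structure of $t$, treating the two claims essentially independently but invoking part (1) inside the $\beta$-case of part (2). Since the back translation $(.)^{*}$ is defined by a case analysis on the head of each term, the recurring point to verify throughout is that the clause of the definition that applies to a term is not disturbed by the operation performed: substituting an ordinary variable $x$ in part (1), or firing a single redex in part (2).

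For part (1), I would argue by induction on $t$. The base cases are immediate: $x^{*}=x$ gives $((u/x)x)^{*}=u^{*}=(u^{*}/x)x^{*}$, while for a variable $y\neq x$, for the sorts $Type$ and $Kind$, and for the constants $\dot{s}$ and $U_{s}$ of $\Sigma_{P}$, neither side depends on $x$, since these symbols differ from $x$ and their $(.)^{*}$-images contain no $x$. For $\Pi x:A~B$ and $\lambda x:A~M$ the claim follows by applying the induction hypothesis to the immediate subterms. For the three application clauses — $\varepsilon_{s}~u$, $\dot{\Pi}_{\langle s_1,s_2,s_3\rangle}~A~B$, and a general application $P~Q$ — the key observation is that the head symbol selecting the clause ($\varepsilon_{s}$, $\dot{\Pi}_{\langle s_1,s_2,s_3\rangle}$, or neither) is a constant of $\Sigma_{P}$ and is therefore left unchanged by $(u/x)$, so the same clause applies to $(u/x)t$ as to $t$; the desired equality then falls out of the induction hypotheses on $A$, $B$ (respectively $P$, $Q$).

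For part (2), I would first treat the case in which the contracted redex sits at the root. A $\beta$-redex $(\lambda y:A~M)~N$ falls under the general application clause, so $t^{*}=(\lambda y:A^{*}~M^{*})~N^{*}$ reduces in one $P$-step to $(N^{*}/y)M^{*}$, which equals $((N/y)M)^{*}$ by part (1); thus a $\beta$-step maps to exactly one $\beta$-step. For the first universe-reduction rule we have $(\varepsilon_{s_2}~\dot{s_1})^{*}=\dot{s_1}^{*}=s_1=U_{s_1}^{*}$, so the step maps to an identity; for the second, $(\varepsilon_{s_3}~(\dot{\Pi}_{\langle s_1,s_2,s_3\rangle}~X~Y))^{*}=\Pi z:X^{*}~(Y^{*}~z)$ is exactly the back translation of the right-hand side $\Pi z:(\varepsilon_{s_1}~X)~(\varepsilon_{s_2}~(Y~z))$, so this step too maps to an identity. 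When the redex lies strictly inside a subterm, I would propagate by induction through $\Pi$, $\lambda$ and application, using that $\lra_{\beta}^{*}$ is a congruence in $P$; here again one checks that reducing a proper subterm preserves the outer pattern (the heads $\varepsilon_{s}$ and $\dot{\Pi}_{\langle s_1,s_2,s_3\rangle}$ being constants), so the applicable clause of $(.)^{*}$ is unchanged.

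The main obstacle is precisely this bookkeeping of the clause-selecting case analysis: one must confirm that neither substitution nor reduction can spuriously create or destroy an $\varepsilon_{s}~\cdot$ or $\dot{\Pi}_{\langle s_1,s_2,s_3\rangle}~\cdot~\cdot$ pattern in a way that would shift the defining clause of $(.)^{*}$ out from under the induction. Granting this, the substantive content is just the asymmetry between the two kinds of reduction: a $\beta$-step descends to a single $\beta$-step in $P$, whereas each universe-reduction step is absorbed into an equality, because $(.)^{*}$ already decodes the $\varepsilon$/$\dot{\Pi}$ machinery into the corresponding product.
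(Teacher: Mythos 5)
Your proposal is correct and follows essentially the same route as the paper: part (1) by induction on the structure of $t$, and part (2) by observing that a $\beta$-step back-translates to a $\beta$-step (using part (1) at the root redex) while each universe-reduction step back-translates to a syntactic identity, since $(\varepsilon_{s_2}~\dot{s_1})^* = s_1 = U_{s_1}^*$ and $(\varepsilon_{s_3}~(\dot{\Pi}_{\langle s_1,s_2,s_3\rangle}~X~Y))^*$ coincides with the back translation of the contractum. The paper's own proof is a two-line condensation of exactly this argument, leaving implicit the clause-selection bookkeeping that you spell out.
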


\begin{proof}
\begin{enumerate}
\item By induction on $t$.
\item If $t \lra_{\beta} u$ then $t^* \lra_{\beta} u^*$, and if $t \lra_{{\cal R}} u$, then $t^* = u^*$.
\end{enumerate}
\end{proof}

\medskip 

\begin{proposition}
\label{equiv} For all terms $A$, $B$ of $P$ and $C$, $D$ of $\lambda \Pi_P$ 
(such that $\|A\|$ and $\|B\|$ are well defined), 
\begin{enumerate}
\item If $A \equiv_{_{\beta}} B $, then $\|A\| ~\equiv \|B\|$.
\item If $C \equiv D $, then $C^* ~\equiv_{_{\beta}} D^*$.
\item If $\|A\| \equiv \|B\| $, then $A ~\equiv_{_{\beta}} B$.
\end{enumerate}
\end{proposition}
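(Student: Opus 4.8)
The plan is to prove the three items by transporting reductions across the translations $|\cdot|$, $\|\cdot\|$ and the back translation $(.)^*$, and then to obtain the last and most delicate item as a direct consequence of the second one together with Proposition \ref{idp}.

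For item (1) I would argue through a common reduct rather than through a zig-zag, since subject expansion may fail in a Pure Type System. As $A \equiv_\beta B$ and $\beta$-reduction is confluent in $P$, there is a term $C$ with $A \lra_\beta^* C$ and $B \lra_\beta^* C$. Because $\|A\|$ and $\|B\|$ are well defined, $A$ and $B$ are typed by sorts $s_A$ and $s_B$; subject reduction gives $C:s_A$ and $C:s_B$, and functionality (uniqueness of types) forces $s_A = s_B =: s$, since distinct sorts are normal and not $\beta$-convertible. Iterating Proposition \ref{reduct}(2) yields $|A| \lra_\beta^* |C|$ and $|B| \lra_\beta^* |C|$, hence, prefixing the decoding function, $\varepsilon_s~|A| \lra_\beta^* \varepsilon_s~|C|$ and $\varepsilon_s~|B| \lra_\beta^* \varepsilon_s~|C|$ in $\lambda \Pi_P$. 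As these $\beta$-steps are in particular $\lra$-steps, we conclude $\|A\| \equiv \|B\|$. The only extra bookkeeping arises when $A$ or $B$ is a sort $s'$, where $\|s'\|$ is either $\varepsilon_{s''}~\dot{s'}$ or $U_{s'}$; in either case $\|s'\| \equiv U_{s'}$ (by the universe-reduction rule, resp. by the definition on top sorts), so the computation still closes up to $\equiv$.

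For item (2) I would again pass to a common reduct, now using the confluence of $\lra$ in $\lambda \Pi_P$ established in the previous section. If $C \equiv D$ then there is a term $E$ with $C \lra^* E$ and $D \lra^* E$. Iterating Proposition \ref{backreduct}(2) gives $C^* \lra_\beta^* E^*$ and $D^* \lra_\beta^* E^*$ in $P$, whence $C^* \equiv_\beta E^* \equiv_\beta D^*$, that is $C^* \equiv_\beta D^*$.

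Item (3) is then immediate: assuming $\|A\| \equiv \|B\|$, item (2) applied to the $\lambda \Pi_P$-terms $\|A\|$ and $\|B\|$ gives $\|A\|^* \equiv_\beta \|B\|^*$, and Proposition \ref{idp} identifies $\|A\|^* = A$ and $\|B\|^* = B$, so that $A \equiv_\beta B$. I expect the main obstacle to lie entirely in item (1): keeping track of the governing sort $s$ under reduction (which forces the detour through a common reduct and the appeal to functionality) and handling the translation of sorts, whereas items (2) and (3) reduce cleanly to confluence of $\lra$ together with the already-established reduction lemma \ref{backreduct} and inverse lemma \ref{idp}.
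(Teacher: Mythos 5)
Your proof is correct, but it takes a genuinely different route from the paper's on the first two items. The paper proves item (1) by induction on the length of the conversion path --- the zig-zag of $\beta$-reductions and $\beta$-expansions between $A$ and $B$ --- applying Proposition \ref{reduct} at each step, and item (2) by the same zig-zag induction using Proposition \ref{backreduct}; item (3) is then obtained, essentially as you do, from (2) and Proposition \ref{idp}. You instead replace each zig-zag by a valley: confluence of $\beta$ in $P$ (resp.\ of $\lra$ in $\lambda \Pi_P$) gives a common reduct, and the reduction lemmas are iterated along the two descending branches. For item (2) the two arguments are interchangeable, since $(.)^*$ is total on terms of $\lambda \Pi_P$, typed or not, so the paper's induction needs no typing hypotheses there. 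For item (1), however, your detour buys real rigor: the translations $|\cdot|$ and $\|\cdot\|$ are only defined on well-typed terms, and in an arbitrary conversion path the intermediate terms need not be typable (subject expansion fails in a PTS), so the paper's one-line induction implicitly presupposes what your valley argument guarantees --- subject reduction keeps every intermediate term typed along the two reduction sequences. Likewise, your explicit bookkeeping of the governing sort (subject reduction plus uniqueness of types in a functional PTS, so that $\|A\|$, $\|B\|$ and the common reduct all use the same $\varepsilon_s$) addresses a point the paper glosses over entirely. The price is heavier machinery: you invoke confluence of $P$, confluence of $\lambda \Pi_P$, subject reduction and functionality, where the paper's induction is elementary modulo the typability caveat.
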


\begin{proof} 
\begin{enumerate}
\item By induction on the length of the path of $\beta$-reductions and
  $\beta$-expansions between $A$ and $B$, and by Proposition \ref{reduct}.
\item By the same reasoning as for the first point, using Proposition \ref{backreduct}.
\item By the first and second points and Proposition \ref{idp}.

\end{enumerate}
\end{proof}

\begin{proposition}[Conservativity]
\label{conservativity} If there exists a context $\Gamma$, a term $A$
of $P$, and a term $t$, in weak $\eta$-long normal form, of $\lambda \Pi_P$, such that
 $\|\Gamma\| \vdash t : \|A\| $,
 Then there exists a term $u$ of $P$ such that~
  $|u|~ \equiv t $~ and ~$  \Gamma \vdash u : A $.
\end{proposition}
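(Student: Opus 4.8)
The plan is to deduce the statement from a more general lemma, proved by induction on the structure of the weak $\eta$-long normal term $t$, in which the type is left arbitrary: \emph{for every $P$-context $\Gamma$, every type $T$ of $\lambda\Pi_P$ and every weak $\eta$-long normal $t$, if $\|\Gamma\| \vdash t : T$ then $\Gamma \vdash t^* : T^*$ and $|t^*| \equiv t$.} The stated proposition then follows at once by taking $T = \|A\|$ and using the right inverse property $\|A\|^* = A$ of Proposition \ref{idp}, so that $u = t^*$ witnesses $\Gamma \vdash t^* : A$ and $|t^*| \equiv t$. The back translation $(.)^*$ is thus the source of the witness, and the whole difficulty is to show that on this restricted class of terms the forward translation $|.|$ is, up to $\equiv$, a left inverse of $(.)^*$ --- the converse of what Proposition \ref{idp} already provides.

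To run the induction I first record, for each shape of $t$, how the type $T$ determines $T^*$. In every case $T$ is convertible to the declared or synthesised type of the head of $t$, and Proposition \ref{equiv}(2) together with Proposition \ref{idp} turns such an equivalence into a $\beta$-equivalence in $P$: if $T \equiv U_{s}$ then $T^* \equiv_{\beta} s$, if $T \equiv \|C\|$ (the declared type of a variable $x{:}C \in \Gamma$) then $T^* \equiv_{\beta} C$, and so on. A normal inhabitant $t$ of such a $T$ can only be a variable-headed neutral term, a sort code $\dot{s_1}$, a saturated product code $\dot{\Pi}_{\langle s_1,s_2,s_3\rangle}~t_1~t_2$, or an abstraction: it can be neither a sort nor a product, nor have $U_{s'}$ or $\varepsilon_{s'}$ as head, because no $\varepsilon$-headed term ever reduces to $Type$ or $Kind$ and the universe rules only produce universes or products.

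The cases are then discharged using the corresponding $P$ typing rule followed by Conversion (legitimate because the two types are $\beta$-equivalent in $P$). For a variable, $x^* = x$ is typed by the Variable rule and $|x^*| = x = t$. For $\dot{s_1}$, with axiom $\langle s_1,s_2\rangle$, the Sort rule gives $\Gamma \vdash s_1 : s_2$ and $|s_1| = \dot{s_1} = t$. For an abstraction $t = \lambda x{:}C~M$, the type is convertible to a product $\Pi x{:}C~D$ with $M$ of type $D$ in $\|\Gamma\|, x{:}C$; here I apply the induction hypothesis to the body $M$ in the extended context and assemble $t^* = \lambda x{:}C^*~M^*$ by Abstraction, the identity $|t^*| \equiv t$ following from the induction hypotheses and Proposition \ref{reduct}. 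For a product code the sorts $s_1,s_2,s_3$ are read directly off the subscript of $\dot{\Pi}$ (so $\langle s_1,s_2,s_3\rangle \in R$), the type is $\equiv U_{s_3}$, I set $t^* = \Pi x{:}t_1^{*}~(t_2^{*}~x)$, apply the induction hypothesis to $t_1$ and to the body of $t_2$, and build $\Gamma \vdash t^* : s_3$ by the Product rule.

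Two points require care, and one of them is the real obstacle. The minor point is bookkeeping about contexts: to apply the induction hypothesis in an extended context I must know that the annotation $C$ appearing in an abstraction, or the type $\varepsilon_{s_1}~t_1$ introduced in the product-code case, is convertible to a genuine translated type $\|C^*\|$; this forces a simultaneous induction with the type-level invariant $C \equiv \|C^*\|$ (itself proved by the same case analysis, using Proposition \ref{Piequiv} for the product subcase), after which stability of typing under conversion of context annotations lets the hypothesis apply. The real obstacle is the equation $|t^*| \equiv t$ in the product-code case. Since $\equiv$ is generated by $\beta$ and the universe rules alone and contains no $\eta$ rule, computing $|t^*| = \dot{\Pi}_{\langle s_1,s_2,s_3\rangle}~|t_1^{*}|~(\lambda x~(|t_2^{*}|~x))$ shows that it can equal $t = \dot{\Pi}_{\langle s_1,s_2,s_3\rangle}~t_1~t_2$ only when $t_2$ is itself an abstraction, for otherwise $|t^*|$ would carry a $\lambda$ in the position where $t$ carries a neutral term, and two such distinct normal forms cannot be identified. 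This is precisely what the weak $\eta$-long hypothesis secures: it forces every $\dot{\Pi}$ to occur saturated with its functional argument $\eta$-expanded, so that $t_2 = \lambda x{:}E~t_2'$ and the two sides agree up to $\beta$ and the universe rules. Lifting the remaining restriction to weak $\eta$-long forms is then carried out separately, as announced in the text.
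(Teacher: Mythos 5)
Your reduction rests on a ``more general lemma'' that is false, so the whole plan collapses at its first step. With $T$ arbitrary, the hypothesis $\|\Gamma\| \vdash t : T$ is satisfied by terms that your case analysis claims cannot occur: $t = Type$ (with $T = Kind$), $t = U_s$ (with $T = Type$), $t = \varepsilon_s$ (with $T = U_s \Rightarrow Type$), or any well-typed product. All of these are normal and vacuously weak $\eta$-long, yet the conclusions fail for them: for $t = Type$ the lemma would require $\Gamma \vdash s_0 : s_0$, which is underivable in, say, the Calculus of Constructions, and $|Type^{*}| = |s_0| = \dot{s_0}$, which by confluence is not convertible to $Type$; similarly $|U_s^{*}| = \dot{s} \not\equiv U_s$, and for a bare $\varepsilon_s$ the back translation $t^{*}$ is not even defined. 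The point is that your structural claim --- that $t$ ``can be neither a sort nor a product, nor have $U_{s'}$ or $\varepsilon_{s'}$ as head'' --- is not a consequence of normality plus $\eta$-longness; in the paper it is derived, via confluence of $\lambda\Pi_P$ and the back translation, from the hypothesis that the type of $t$ is convertible to a translated type $\|A\|$. By generalizing the type away you removed exactly the hypothesis that makes the case analysis sound.

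Nor is this a repairable slip of formulation: once the type is kept convertible to some $\|A\|$, re-establishing that invariant at every recursive call is precisely the hard content of the paper's proof, and your sketch does not carry it out. For $t = \lambda x{:}B~t'$, before the induction hypothesis can be applied to $t'$ one must exhibit a product $\Pi x{:}B^{\sharp}~C^{\sharp}$ that is \emph{well typed in $P$} and convertible to $A$; the paper obtains it from $A \equiv \Pi x{:}B^{*}~C^{*}$ using confluence of $\lambda\Pi_P$ together with subject reduction in $P$ (your appeal to the Conversion rule of $P$ silently needs the same thing, since Conversion requires both types to be well typed with a common sort). Worse, your list of discharged cases covers only a bare variable, $\dot{s_1}$, an abstraction, and a saturated $\dot{\Pi}$-code; the case of a neutral term $x~t_1 \ldots t_n$ with $n \geq 1$ and $x$ declared in $\Gamma$ --- the case for which the paper introduces the statements $(H_k)$ and proves by a side induction that each partial application $x~t_1\ldots t_k$ has a type convertible to some $\|T_k\|$ --- is never treated. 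Your general lemma was exactly the device meant to make that case free, by applying the induction hypothesis to $x~t_1\ldots t_{n-1}$ at whatever type it happens to have; since the lemma is false, that application is illegitimate until the $(H_k)$ invariant has been proved, and nothing in your proposal proves it.
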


\begin{proof} By induction on $t$.
\begin{itemize}
\item[$\bullet$] If $t$ is a well-typed product or sort, then it cannot
be typed by a translated type (by confluence of $\lambda \Pi_P$). 

\item[$\bullet$] If $t=\lambda x:B~t'$. 
The term $t$ is well typed, thus there exists a term $C$ of $\lambda \Pi_P$, such that $~\|\Gamma\| \vdash t : \Pi x:B~C~$ ($\alpha_0$). 

Therefore 
 $\|A\| \equiv \Pi x:B~C$   ($\alpha_1$),
and $ A \equiv \|A\|^{*} \equiv (\Pi x:B~C)^{*} = \Pi x:B^{*}~C^{*} $ ($\alpha_2$). 

$\|A\|$ is well defined, and $A$ cannot be a sort by  ($\alpha_2$) and confluence of $\lambda \Pi_P$, then $A$ is well-typed.

Therefore, by ($\alpha_2$), confluence of $\lambda \Pi_P$ and subject-reduction of $\lra_{\beta}$, there exists 
terms $B^{\sharp}$ and $C^{\sharp}$ and sorts $s_{B^{\sharp}}$, $s_{C^{\sharp}}$, $s_3$ of $P$, such that 
both $A$ and $\Pi x:B^{*}~C^{*}$ reduce to $ \Pi x:B^{\sharp}~C^{\sharp} $ ($\beta$), with  $\Gamma \vdash B^{\sharp} : s_{B^{\sharp}}$, and $\Gamma, B^{\sharp} : s_{B^{\sharp}} \vdash C^{\sharp} : s_{C^{\sharp}}$ where $\langle s_{B^{\sharp}}, s_{C^{\sharp}}, s_3 \rangle$ is a rule of $P$. 

In particular, $\| B^{\sharp} \|$, $\| C^{\sharp} \|$ and $\|\Pi x:B^{\sharp}~C^{\sharp}\| $ are well defined. 

Moreover, $ A \equiv  \Pi x:B^{\sharp}~C^{\sharp} $ by ($\beta$), then, by Propositions \ref{equiv} and \ref{Piequiv}, $\|A\| \equiv \|\Pi x:B^{\sharp}~C^{\sharp}\| \equiv \Pi x:\|B^{\sharp}\|~\|C^{\sharp}\|$. 

Therefore $\Pi x:B~C \equiv \Pi x:\|B^{\sharp}\|~\|C^{\sharp}\|$ by ($\alpha_1$), and by confluence of $\lambda \Pi_P$, we have $B \equiv \|B^{\sharp}\|$ ($\gamma_0$).

Then, by ($\alpha_0$), $~\|\Gamma\| \vdash t : \Pi x:\|B^{\sharp}\|~\|C^{\sharp}\|$ and $\|\Gamma\|, x:\|B^{\sharp}\| \vdash t' : \|C^{\sharp}\| $.

The term $\lambda x : B~t'$ is in weak $\eta$-long normal form, thus
$t'$ is also in weak $\eta$-long normal form, and, by induction
hypothesis, there exists a term $u'$ of $P$, such that $ |u'| \equiv t' $ ($\gamma_1$)  and $\Gamma, x:B^{\sharp} \vdash u' : C^{\sharp}$ ($\gamma_2$).

Let $u = \lambda x:B^{\sharp}~u'$.

By ($\gamma_2$), we have ~$ \Gamma \vdash u~:~\Pi x: B^{\sharp}~ C^{\sharp}$, and $\Gamma \vdash u~:~A$ by ($\beta$).

 And ~$ |u| = |\lambda x:B^{\sharp}~u'| = \lambda x:\|B^{\sharp}\|~|u'| \equiv \lambda x:B~t' = t$ by ($\gamma_0$) and ($\gamma_2$).

\medskip

\item[$\bullet$] If $t$ is an application or a variable,
as it is normal, it has the form $~x~t_1~...~t_n~$ for some variable
$x$ and terms $t_1$, ..., $t_n$.  We have  $\|\Gamma\| \vdash x~t_1~...~t_n : \|A\| ~~~(\alpha_0)$.

\begin{itemize}

\item[$\multimap$] If $x$ is a variable of the context $\Sigma_P$,

\begin{itemize}

\item If $~x=\dot{s_1}~$ (where $\langle s_1,s_2 \rangle$ is an axiom of $P$), 
\\then $n=0$ (because $t$ is well typed) and $\|A\| = U_{s_2}$. 
\\ We have $~\vdash s_1 : s_2$ in $P$, therefore $~\Gamma \vdash s_1 : s_2$.

\item If $~x=U_s~$ (where $s$ is a sort of $P$), then $n=0$ and $\|A\|
\equiv Type$. That's an absurdity by confluence of $\lambda
\Pi_P$.

\item If $~x=\varepsilon_s~$ (where $s$ is a sort of $P$), then, as
$t$ is well typed $n \leq 1$.

\begin{itemize}

\item[$\star$] If $n=1$, then $\|\Gamma\| \vdash t_1 : U_s$, and
$\|A\| \equiv Type$ (absurdity). 

\item[$\star$] If $n=0$, then $\|A\| \equiv U_s \Ra Type$, thus by
Propositions \ref{equiv} and \ref{Piequiv}, \\$~U_s \Ra Type \equiv \|( U_s
\Ra Type)^{*}\| = \|s \Ra s_0\| \equiv \|s\| \Ra \|s_0\|$.
\\Therefore $~Type \equiv \|s_0\|~$ (absurdity).
\end{itemize}

\item If $~x=\dot{\Pi}_{\langle s_1, s_2, s_3 \rangle}~$ (where
$\langle s_1, s_2, s_3 \rangle$ is a rule of $P$), then as $t$ is
well-typed and in weak $\eta$-long form, $n = 2$. 
We have $~ \|A\| \equiv U_{s_3}~$ thus $~ A \equiv s_3~$ 
by Proposition \ref{equiv}.
 \\~~ And $~\|\Gamma\| \vdash t_1 : U_{s_1} ~~~ {\em i.e.} ~~~ \|\Gamma\| \vdash t_1 : \|s_1\| $.
\\ ~~And $~\|\Gamma\|, t_1 : U_{s_1} \vdash t_2 : ((\varepsilon_{s_1} t_1) \Ra U_{s_2}) ~~~  (\alpha_1)$

$t_1$ is also in weak $\eta$-long normal form, then, by induction hypothesis, there exists a term $u_1$ of $P$ such that:
\begin{center}
$ |u_1| \equiv t_1  ~~~ and ~~~ \Gamma \vdash u_1 : s_1 ~~(\beta_1)$
\end{center}

Then, by $(\alpha_1)$, 
$~ \|\Gamma\|, t_1 : \|s_1\| \vdash t_2 : \| u_1 \Ra s_2\| $.
\\In particular, $~~ \|\Gamma\|, t_1 : \|s_1\| \vdash t_2 : \| u_1 \| \Ra \|s_2\| $.

However $t_2$ is also in weak $\eta$-long normal form, then there exists a term $t'_2$ (in weak $\eta$-long normal form) of $\lambda \Pi_P$ such that
 \begin{center}
 $t_2 = \lambda x: \| u_1 \| ~t'_2 ~~~ and ~~~  \|\Gamma\|, x : \|u_1\| \vdash t'_2 : \|s_2\| $
 \end{center}

By induction hypothesis, there exists a term $u'_2$ of $P$, such that
\begin{center}
$   |u'_2| \equiv t'_2  ~~~ and ~~~ \Gamma, x : u_1 \vdash u'_2 : s_2 ~~(\beta_2)$
\end{center}

Then we choose $~ u = \Pi x:u_1~u'_2~$ that verifies $~ \Gamma \vdash u : s_3 ~$,
by $(\beta_1)$, $(\beta_2)$, and the fact that $\langle s_1,s_2,s_3 \rangle$ is a rule of $P$. And, finally,

$|u| = \dot{\Pi}_{\langle s_1, s_2, s_3 \rangle} |u_1| ~(\lambda x: (\varepsilon_{s_1} |u_1|)~|u'_2| \equiv \dot{\Pi}_{\langle s_1, s_2, s_3 \rangle} t_1 ~t_2 = t$

\end{itemize}

\medskip

\item[$\multimap$] If $x$ is a variable of the context $\Gamma$,

For $k \in \{0,..,n\}$, let $(H_k)$ be the statement: 
\begin{quote}
There exists a term $T_k$ of $P$, such that $\|\Gamma\| \vdash x~t_1~...~t_k ~:~ \|T_k\|$.
\end{quote}

We first prove $(H_{0})$,...,$(H_{n})$ by induction.

\begin{itemize}

\item[$\star$] $k=0$ : $x$ is a variable of the context $\Gamma$, then there exists a well typed term or a sort $T$ in $P$ such that $\Gamma$ contains $x:T$. Therefore $\|\Gamma\|$ contains $x:\|T\|$.

\item[$\star$] $0 \leq k \leq n-1$ : We suppose $(H_k)$.

$x~t_1~...~t_{k+1}~$ is well typed in $\Gamma$, then there exists terms $D$ and $E$ of $\lambda \Pi_P$ such that :

$~\|\Gamma \| \vdash t_{k+1} : D  ~~(\delta_1) ~$, 
$~     \|\Gamma \| \vdash x~t_1~...~t_k : \Pi y:D~E  ~~(\delta_2)$, and $~     \|\Gamma \| \vdash x~t_1~...~t_{k+1} : (t_{k+1}/y)E  ~~(\delta_3)$.

However, by $(H_k)$, there exists $T_k$, such that $\|\Gamma\| \vdash x~t_1~...~t_k ~:~ \|T_k\|$, therefore $\|T_k\| \equiv \Pi y:D~E$ ($\delta_4$), by ($\delta_2$), and $T_k \equiv \Pi y:D^{*}~E^{*} $ ($\delta_5$),  by Propositions \ref{idp} and \ref{equiv}.

$\|T_k\|$ is well defined and $T_k$ can't be a sort by ($\delta_5$) then $T_k$ is well typed.
Then, by ($\delta_5$), confluence of $\lambda \Pi_P$ and subject-reduction of $\lra_{\beta}$, there exists 
terms $D^{\sharp}$ and $E^{\sharp}$ and sorts $s_{D^{\sharp}}$, $s_{E^{\sharp}}$, $s_3$ of $P$, such that 
both $T_k$ and $\Pi x:D^{*}~E^{*}$ reduce to $ \Pi x:D^{\sharp}~E^{\sharp} $ ($\delta_6$), with  $\Gamma \vdash D^{\sharp} : s_{D^{\sharp}}$, and $\Gamma, D^{\sharp} : s_{D^{\sharp}} \vdash E^{\sharp} : s_{E^{\sharp}}$ where $\langle s_{D^{\sharp}}, s_{E^{\sharp}}, s_3 \rangle$ is a rule of $P$. 

In particular, $\| D^{\sharp} \|$, $\| E^{\sharp} \|$ and $\|\Pi x:D^{\sharp}~E^{\sharp}\| $ are well defined. 

Moreover, $ T_k \equiv  \Pi x:D^{\sharp}~E^{\sharp} $ by ($\delta_6$), then, by Propositions \ref{equiv} and \ref{Piequiv}, $\|T_k\| \equiv \|\Pi x:D^{\sharp}~E^{\sharp}\| \equiv \Pi x:\|D^{\sharp}\|~\|E^{\sharp}\|$. 

In particular, $~ E \equiv \|E^{\sharp}\|$, by ($\delta_4$) and confluence of $\lambda \Pi_P$. 

Moreover, $t_{k+1}$ is in weak $\eta$-long form, then by induction hypothesis, there exists a term $u_{k+1}$ of $P$ such that $|u_{k+1}| \equiv t_{k+1}$.

Finally, $(t_{k+1}/y)E \equiv (|u_{k+1}|/y)E \equiv (|u_{k+1}|/y)\|E^{\sharp}\| \equiv \| (u_{k+1}/y)E^{\sharp} \|$.

 And we conclude, by $(\delta_3)$ and the conversion rule of $\lambda \Pi_P$.

\end{itemize}

Then, if $n=0$, we take $u = x$ and $\Gamma$ contains $x:T$ with
$\|T\| \equiv \|A\|$.\\ 
And, if $n>0$, then, by $(\alpha_0)$, there exists terms $B$ and $C$ of
$\lambda \Pi_P$ such that  $~\|\Gamma \| \vdash t_{n} : B
~~(\theta_1)$ and  $~ \|\Gamma \| \vdash x~t_1~...~t_{n-1} : \Pi y:B~C
~~  (\theta_2)~$ 
with $~\|A\| \equiv (t_{n}/y)C  ~~  (\theta_3)~$.

By ($H_{n-1}$), there exists $T_{n-1}$, such that $~ \|\Gamma \| \vdash x~t_1~...~t_{n-1} : \|T_{n-1}\|$, therefore $\|T_{n-1}\| \equiv \Pi y:B~C$ by ($\theta_2$), and $T_{n-1} \equiv \Pi y:B^{*}~C^{*}$ ($\theta_4$).

$\|T_{n-1}\|$ is well defined, and $T_{n-1}$ cannot be a sort by  ($\theta_4$), then $T_{n-1}$ is well-typed.

Therefore, by ($\theta_2$), confluence of $\lambda \Pi_P$ and subject-reduction of $\lra_{\beta}$, there exists 
terms $B^{\sharp}$ and $C^{\sharp}$ and sorts $s_{B^{\sharp}}$, $s_{C^{\sharp}}$, $s_3$ of $P$, such that 
both $T_{n-1}$ and $\Pi x:B^{*}~C^{*}$ reduce to $ \Pi x:B^{\sharp}~C^{\sharp} $ ($\mu_0$), with  $\Gamma \vdash B^{\sharp} : s_{B^{\sharp}}$, and $\Gamma, B^{\sharp} : s_{B^{\sharp}} \vdash C^{\sharp} : s_{C^{\sharp}}$ where $\langle s_{B^{\sharp}}, s_{C^{\sharp}}, s_3 \rangle$ is a rule of $P$. 

In particular, $\| B^{\sharp} \|$, $\| C^{\sharp} \|$ and $\|\Pi x:B^{\sharp}~C^{\sharp}\| $ are well defined. 

Moreover, $ T_{n-1} \equiv  \Pi x:B^{\sharp}~C^{\sharp} $ by ($\mu_0$), then, by Propositions \ref{equiv} and \ref{Piequiv}, $\|T_{n-1}\| \equiv \|\Pi x:B^{\sharp}~C^{\sharp}\| \equiv \Pi x:\|B^{\sharp}\|~\|C^{\sharp}\|$. 

Therefore $\Pi x:B~C \equiv \Pi x:\|B^{\sharp}\|~\|C^{\sharp}\|$, and by confluence of $\lambda \Pi_P$, we have $B \equiv \|B^{\sharp}\|$ ($\gamma_0$).

Thus, $~\|\Gamma \| \vdash t_{n} : \|B^{\sharp}\|    ~$ and $~      \|\Gamma \| \vdash x~t_1~...~t_{n-1} : \| \Pi y : B^{\sharp} ~C^{\sharp}\|  $.

$t_{n}$ and $x~t_1~...~t_{n-1}$ are both in weak $\eta$-long normal form, then, by induction hypothesis, there exists terms $w_1$ and $w_2$ of $P$ such that:

\begin{center}
$ |w_1| \equiv x~t_1~...t_{n-1}  ~$ and $~ \Gamma \vdash w_1 :  \Pi y:B^{\sharp}~C^{\sharp} $

$ |w_2| \equiv t_{n}  ~$and $~ \Gamma \vdash w_2 :  B^{\sharp} $
\end{center}

Let $u=w_1~w_2$, we have:
\begin{center}
$ |u| = |w_1|~|w_2| \equiv x~t_1~...~t_{n-1}~t_{n}~$ and
$~ \Gamma \vdash u : (w_2/y)C^{\sharp}$.
\end{center}

However, by $(\theta_3)$, Proposition \ref{equiv} and the fact that $C^{*} \equiv C^{\sharp}$, we have: 
$~A \equiv (t_n^{*}/y)C^{*} \equiv (w_2/y)C^{*} \equiv (w_2/y)C^{\sharp}~$, and, finally,
$~ \Gamma \vdash u : A~$.

\end{itemize}
\end{itemize}
\end{proof}

Finally, we get rid of the weak $\eta$-long form restriction with the
following Propositions.

\begin{proposition}
\label{etared}

 For all terms $A$, $B$ of $\lambda \Pi_P$, and for all well typed term or sort $C$ of $P$,
\begin{enumerate}
\item If $A \lra B$ then $A" \lra^{*} B"$
\item If $A \equiv B$ then $A" \equiv B"$
\item If $A$ is in weak $\eta$-long form, then $A" \lra_{\beta}^{*} A$, in particular $A" \equiv A$
\item $ \|C\|" \equiv \|C\| $
\item If $A \equiv \|C\|$ then $A" \equiv A$
\end{enumerate}
\end{proposition}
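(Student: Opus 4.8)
Throughout, recall that $A"$ is obtained by $\eta$-expanding every occurrence of $\dot{\Pi}_{\langle s_1,s_2,s_3\rangle}$ into $\lambda X:U_{s_1}~\lambda Y:((\varepsilon_{s_1}~X) \Ra U_{s_2})~(\dot{\Pi}_{\langle s_1,s_2,s_3\rangle}~X~Y)$, so that $"$ is a homomorphism for all the term constructors and sends $\dot{\Pi}_{\langle s_1,s_2,s_3\rangle}$ to a closed term. The plan is to first record the substitution lemma $((u/x)t)" = (u"/x)t"$, proved by a routine induction on $t$: the only symbol with non-trivial behaviour, $\dot{\Pi}_{\langle s_1,s_2,s_3\rangle}$, is replaced by a closed term, so its expansion commutes with substitution. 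This lemma drives the $\beta$-case of the first point.

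For point 1, I would argue by induction on $A$, the content lying in the two root-redex cases. If the contracted redex is a $\beta$-redex $(\lambda x:T~M)~N$, its expansion $(\lambda x:T"~M")~N"$ $\beta$-reduces to $(N"/x)M" = ((N/x)M)"$ by the substitution lemma. If it is the universe rule on $\dot{\Pi}$, namely $\varepsilon_{s_3}~(\dot{\Pi}_{\langle s_1,s_2,s_3\rangle}~P~Q) \lra \Pi x:(\varepsilon_{s_1}~P)~(\varepsilon_{s_2}~(Q~x))$, then in $A"$ the head $\dot{\Pi}$ is $\eta$-expanded, so the subterm reads $\varepsilon_{s_3}~((\lambda X~\lambda Y~(\dot{\Pi}_{\langle s_1,s_2,s_3\rangle}~X~Y))~P"~Q")$; this $\beta$-reduces in two steps to $\varepsilon_{s_3}~(\dot{\Pi}_{\langle s_1,s_2,s_3\rangle}~P"~Q")$ and then fires the universe rule to reach $\Pi x:(\varepsilon_{s_1}~P")~(\varepsilon_{s_2}~(Q"~x)) = B"$. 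This explains why the conclusion is stated with $\lra^{*}$ rather than a single step. The rule $\varepsilon_{s_2}~\dot{s_1} \lra U_{s_1}$ carries no $\dot{\Pi}$ and is immediate, and the congruence cases follow from the induction hypothesis since $"$ is compositional.

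Point 2 follows from point 1 and the confluence of $\lambda\Pi_P$ established above: if $A \equiv B$, confluence gives a common reduct $D$, and point 1, extended to $\lra^{*}$ by transitivity, lifts $A \lra^{*} D$ and $B \lra^{*} D$ to $A" \lra^{*} D"$ and $B" \lra^{*} D"$, whence $A" \equiv B"$. Point 3 is an induction on $A$: when $A$ is in weak $\eta$-long form each $\dot{\Pi}_{\langle s_1,s_2,s_3\rangle}$ already occurs applied, as $\dot{\Pi}_{\langle s_1,s_2,s_3\rangle}~t_1~t_2$, so its expansion contributes exactly the redex $(\lambda X~\lambda Y~(\dot{\Pi}_{\langle s_1,s_2,s_3\rangle}~X~Y))~t_1"~t_2"$, which $\beta$-reduces to $\dot{\Pi}_{\langle s_1,s_2,s_3\rangle}~t_1"~t_2"$; with the induction hypotheses $t_1" \lra_{\beta}^{*} t_1$ and $t_2" \lra_{\beta}^{*} t_2$ this gives $A" \lra_{\beta}^{*} A$. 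Points 4 and 5 then reduce to these: an easy induction on the Translation shows that $|C|$, hence $\|C\| = \varepsilon_s~|C|$, only ever introduces $\dot{\Pi}_{\langle s_1,s_2,s_3\rangle}$ in the fully applied shape, so $\|C\|$ is in weak $\eta$-long form and point 3 yields $\|C\|" \lra_{\beta}^{*} \|C\|$, in particular $\|C\|" \equiv \|C\|$, which is point 4. For point 5, $A \equiv \|C\|$ gives $A" \equiv \|C\|"$ by point 2, and $\|C\|" \equiv \|C\| \equiv A$ by point 4.

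The main obstacle is the $\dot{\Pi}$-universe case of point 1: one must verify that $\eta$-expanding the head creates precisely the $\beta$-redex needed to recover the applied form $\dot{\Pi}_{\langle s_1,s_2,s_3\rangle}~P"~Q"$ before the universe rule can fire, and that the substitution lemma correctly governs the bound variable $x$ of the resulting product. Everything else is routine induction using the homomorphic character of $"$.
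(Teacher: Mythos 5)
Your proof follows essentially the same route as the paper's: the same treatment of point 1 (the $\beta$-case via the substitution property of $"$, the axiom rule being unchanged, and the $\dot{\Pi}$-rule case requiring exactly two $\beta$-steps to collapse the expansion before the universe rule can fire), the same two-step $\beta$-reduction remark $(\dot{\Pi}_{\langle s_1,s_2,s_3\rangle}~t_1~t_2)" \lra_{\beta}^{2} \dot{\Pi}_{\langle s_1,s_2,s_3\rangle}~t_1"~t_2"$ for point 3, and the same derivation of points 4 and 5 from the observation that translated terms are in weak $\eta$-long form. The only (minor) departure is point 2, where you appeal to confluence to obtain a common reduct while the paper inducts directly on the conversion zigzag using point 1; both are valid, since confluence of $\lambda\Pi_P$ was established earlier in the paper.
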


\begin{proof}
\begin{enumerate}
\item If $A \lra_{\beta} B$, then $A" \lra_{\beta} B"$  (by induction on $A$).

 If $A \lra_{\cal{R}} B$,

\begin{itemize}
\item 
for all axiom $\langle s_1,s_2 \rangle$, $(\varepsilon_{s_2} ~(\dot{s_1}))" = \varepsilon_{s_2} ~(\dot{s_1}) \lra_{\cal{R}} U_{s_1} = (U_{s_1})"$.

\item for all rule $\langle s_1,s_2,s_3 \rangle$, 
 $(\varepsilon_{s_3} (\dot{\Pi}_{\langle s_1, s_2, s_3 \rangle} C~D))" =
\\ \varepsilon_{s_3} ((\lambda x:U_{s_1} \lambda y:((\varepsilon_{s_1}~x) \Rightarrow U_{s_2})~(\dot{\Pi}_{\langle s_1, s_2, s_3 \rangle}~x~y)) C"~D")
\\ \lra_{\beta}^{2} \varepsilon_{s_3} (\dot{\Pi}_{\langle s_1,s_2, s_3 \rangle} C"~D")
 \lra_{\cal{R}}   \Pi x:(\varepsilon_{s_1}~C")~(\varepsilon_{s_2}~(D"~x)) 
\\ =   \Pi x:(\varepsilon_{s_1}~C")~(\varepsilon_{s_2}~(D~x)") $
\end{itemize}

\item By induction on the number of derivations and expansions from
$A$ to $B$. 

\item  By induction on $A$, remarking that $(\dot{\Pi}_{\langle s_1, s_2, s_3 \rangle} ~t_1~t_2)" \lra_{\beta}^{2} \dot{\Pi}_{\langle s_1, s_2, s_3 \rangle} ~t_1"~t_2"$.

\item A translated term $\|C\|$ is in weak $\eta$-long form.

\item If $A \equiv \|C\|$ then $A" \equiv \|C\|" \equiv \|C\|$, by the
the second and fourth points.
\end{enumerate}
\end{proof}

\begin{proposition}
\label{epilogue}
Let $t$ be a normal term of $\lambda \Pi_P$, 
\begin{center}
if $\|\Gamma\| \vdash t : \|A\|$ then $\|\Gamma\| \vdash t" : \|A\|$ 
\end{center}
 \end{proposition}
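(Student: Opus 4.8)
The plan is to observe that the $\eta$-expansion operation $(.)"$ is, in disguise, a single substitution, and then to lean on the substitution property of $\lambda \Pi_P$ recalled at the start of the paper. For each rule $\langle s_1,s_2,s_3 \rangle \in R$, write $E_{\langle s_1,s_2,s_3 \rangle} = \lambda X:U_{s_1}~\lambda Y:((\varepsilon_{s_1}~X) \Ra U_{s_2})~(\dot{\Pi}_{\langle s_1,s_2,s_3\rangle}~X~Y)$ for the $\eta$-long form of $\dot{\Pi}_{\langle s_1,s_2,s_3\rangle}$, and let $\theta$ be the substitution fixing every variable of $\Sigma_P$ and of $\Gamma$ except the symbols $\dot{\Pi}_{\langle s_1,s_2,s_3\rangle}$, each of which it sends to the corresponding $E_{\langle s_1,s_2,s_3\rangle}$. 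Unfolding the clauses that appear in the proof of Proposition \ref{etared}, one checks that $t" = \theta t$ for every term $t$; in particular $(.)"$ commutes with ordinary substitution, $((u/x)t)" = (u"/x)\,t"$, and with $\Pi$, $\lambda$ and application. The single computation on which everything rests is that $E_{\langle s_1,s_2,s_3\rangle}$ has, in $\Sigma_P$, exactly the type declared for $\dot{\Pi}_{\langle s_1,s_2,s_3\rangle}$: from $X:U_{s_1}$ and $Y:(\varepsilon_{s_1}~X) \Ra U_{s_2}$ one derives $\dot{\Pi}_{\langle s_1,s_2,s_3\rangle}~X~Y : U_{s_3}$, and two abstractions give $E_{\langle s_1,s_2,s_3\rangle} : \Pi X:U_{s_1}~(((\varepsilon_{s_1}~X) \Ra U_{s_2}) \Ra U_{s_3})$.

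The core step is then a main lemma, proved by a single induction on derivations carried out simultaneously for the two judgment forms: if $\Delta$ is well-formed then $\Delta"$ is well-formed, and if $\Delta \vdash t : T$ then $\Delta" \vdash t" : T"$, where $(.)"$ acts pointwise on the context. This is nothing but the substitution property specialised to $\theta$, the simultaneous induction being needed only to feed the well-formedness of the target context into each use of the \textbf{Variable} rule. The sole genuinely new case is the symbol $\dot{\Pi}_{\langle s_1,s_2,s_3\rangle}$, discharged by the typing of $E$ above; the \textbf{Sort} and \textbf{Variable} cases are immediate, \textbf{Product}, \textbf{Abstraction} and \textbf{Application} follow from the commutations just noted, and the two \textbf{Conversion} cases use that $A \equiv B$ implies $A" \equiv B"$, i.e. the second item of Proposition \ref{etared}. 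Since every type declared in $\Sigma_P$ is free of the symbols $\dot{\Pi}_{\langle s_1,s_2,s_3\rangle}$, we have $\Sigma_P" = \Sigma_P$, so the context is modified only on its $\Gamma$ part. Note that this lemma requires no normality assumption whatsoever: it holds for every well-typed term.

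It then remains to specialise. Applying the main lemma to the hypothesis $\|\Gamma\| \vdash t : \|A\|$ yields $\|\Gamma\|" \vdash t" : \|A\|"$, both $\|\Gamma\|"$ and $\|A\|"$ being well-formed by the first half of the lemma. Because translated types are in weak $\eta$-long form, the third and fourth items of Proposition \ref{etared} give $\|A\|" \equiv \|A\|$ and, entry by entry, $\|\Gamma\|" \equiv \|\Gamma\|$. A conversion on the type, together with a context conversion replacing each declared type by the convertible and equally well-formed one, then produces $\|\Gamma\| \vdash t" : \|A\|$, which is the claim.

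The main obstacle is bookkeeping rather than conceptual: one must carry the context throughout the induction in its $(.)"$-expanded shape $\Delta"$ --- a form forced on us by the binders in \textbf{Product} and \textbf{Abstraction} --- and only convert back to the original translated context $\|\Gamma\|$ at the very end, which is where the context-conversion step and the fourth item of Proposition \ref{etared} are used. The commutation of $(.)"$ with substitution in the \textbf{Application} case is the other point to verify with care. The normality hypothesis on $t$ plays no part in this argument; it is needed only downstream, to ensure that a normal form of $t"$ is a weak $\eta$-long normal term that can be fed into Proposition \ref{conservativity}.
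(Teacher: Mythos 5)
Your proof is correct, but it follows a genuinely different route from the paper's. The paper proves Proposition \ref{epilogue} by induction on the structure of the \emph{normal} term $t$, mirroring the case analysis of Proposition \ref{conservativity} (abstraction versus head-variable applications, with the head in $\Sigma_P$ or in $\|\Gamma\|$), and invoking confluence, the back translation $(.)^{*}$ and subject reduction in each case; the normality hypothesis is precisely what makes this case analysis available. You instead observe that $(.)"$ is the substitution $\theta$ sending each $\dot{\Pi}_{\langle s_1,s_2,s_3\rangle}$ to its $\eta$-long form $E_{\langle s_1,s_2,s_3\rangle}$, check that $E_{\langle s_1,s_2,s_3\rangle}$ inhabits in $\Sigma_P$ exactly the type declared for $\dot{\Pi}_{\langle s_1,s_2,s_3\rangle}$, deduce $\|\Gamma\|" \vdash t" : \|A\|"$ from the substitution property (proved as a simultaneous induction with well-formedness of $\Delta"$), and conclude with Proposition \ref{etared} plus type and context conversion. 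This buys both simplicity and generality: your argument shows the normality hypothesis is superfluous for this particular proposition (it is only needed downstream, where Proposition \ref{conservativity} requires a weak $\eta$-long \emph{normal} inhabitant), and it replaces the delicate confluence/back-translation reasoning by a single typing computation for $E_{\langle s_1,s_2,s_3\rangle}$. The price is a reliance on metatheoretic facts not stated in the paper for $\lambda \Pi_P$: the substitution property is recalled only for the $\lambda \Pi$-calculus, so you should note that it extends to the modulo system because $\equiv_{\beta {\cal R}}$ is stable under substitution; and the final context-conversion step (replacing each $x:\|T\|"$ by the convertible $x:\|T\|$), as well as the weakening needed to type $E_{\langle s_1,s_2,s_3\rangle}$ in $\Sigma_P \Delta"$, are standard but unproved lemmas here. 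This debt is fair, since the paper's own proofs use the same facts implicitly --- for instance the abstraction case of Propositions \ref{conservativity} and \ref{epilogue} silently converts the context entry $x:B$ into $x:\|B^{*}\|$ --- but an airtight write-up should state them.
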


\begin{proof} 
 By induction on $t$.
\begin{itemize}

\item[$\bullet$] If $t$ is a well-typed product or sort, then it cannot be typed by a translated type (by confluence of $\lambda \Pi_P$). 

\item[$\bullet$] If $t=\lambda x:B~u$, 
then there exists a term $C$ of $\lambda \Pi_P$, such that \\$\|A\| \equiv \Pi x:B~C$ ($\alpha$), with $\Gamma, x:B \vdash u : C$.
\\By ($\alpha$), we have $B \equiv \|B^{*}\|$ $(\beta)$ and $C \equiv \|C^{*}\|$. Thus $\Gamma, x:\|B^{*}\| \vdash u : \|C^{*}\|$.
Then, by induction hypothesis, we have $\Gamma, x:\|B^{*}\| \vdash u" : \|C^{*}\|$,
therefore $\Gamma \vdash \lambda x:\|B^{*}\|~u" : \Pi x:\|B^{*}\|~\|C^{*}\| \equiv \|A\|$ thus $\Gamma \vdash \lambda x:B~u" :  \|A\|$, by $(\beta)$. 
Finally, by $(\beta)$ and the Proposition \ref{etared}.5,  $\lambda x:B~u" \equiv  \lambda x:B"~u"$, therefore, by subject reduction, $\Gamma \vdash t" = \lambda x:B"~u" :  \|A\|$

\item[$\bullet$] If $t$ is an application or a variable,
as it is normal, it has the form $~x~t_1~...~t_n~$ for some variable
$x$ and terms $t_1$, ..., $t_n$.  We have  $\|\Gamma\| \vdash x~t_1~...~t_n : \|A\| ~~~(\alpha_0)$.

\begin{itemize}

\item[$\multimap$] If $x$ is a variable of the context $\Sigma_P$,

\begin{itemize}

\item If $~x=\dot{s_1}~$ (where $\langle s_1,s_2\rangle$ is an axiom of $P$), 
\\then $n=0$ (because $t$ is well typed) and we have $(\dot{s_1})" = \dot{s_1}$.

\item If $~x=U_s~$ (where $s$ is a sort of $P$), then $n=0$ and $\|A\|
\equiv Type$. That's an absurdity by confluence of $\lambda
\Pi_P$.

\item If $~x=\varepsilon_s~$ (where $s$ is a sort of $P$), then, as
$t$ is well typed $n \leq 1$.

\begin{itemize}

\item[$\star$] If $n=1$, then $\|\Gamma\| \vdash t_1 : U_s$, and
$\|A\| \equiv Type$ (absurdity). 

\item[$\star$] If $n=0$, we have $(\varepsilon_s)"=\varepsilon_s$
\end{itemize}

\item If $~x=\dot{\Pi}_{\langle s_1, s_2, s_3 \rangle}~$ (where
$\langle s_1, s_2, s_3 \rangle$ is a rule of $P$), then as $t$ is
well-typed, $n \leq 2$. Moreover, $\dot{\Pi}_{\langle s_1,s_2,s_3 \rangle}$, $(\dot{\Pi}_{\langle s_1,s_2,s_3 \rangle}~t_1)$, and $(\dot{\Pi}_{\langle s_1,s_2,s_3 \rangle}~t_1~t_2)$ have the same types than their weak $\eta$-long forms.
\end{itemize}

\item[$\multimap$] If $x$ is a variable of the context $\Gamma$,

\begin{itemize}

\item If $n=0$, we have $x"=x$.
\item If $n>0$, then there exists terms $B$ and $C$ of
$\lambda \Pi_P$ such that  \\ $~\|\Gamma \| \vdash t_{n} : B
~~(\alpha_1)$ and  $~ \|\Gamma \| \vdash x~t_1~...~t_{n-1} : \Pi y:B~C
~~  (\alpha_2)~$ 
with $~\|A\| \equiv (t_{n}/y)C  ~~  (\alpha_3)~$. As in the proof of Proposition \ref{conservativity}, we can type $ x~t_1~...~t_{n-1}$ by a translated type, then  $~ \Pi y : B ~C \equiv \Pi y : \|B^{*}\| ~\|C^{*}\|~$.
In particular,
$~ B \equiv \|B^{*}\| ~$ and $~C\equiv\|C^{*}\|$. 
\\Thus, $~\|\Gamma \| \vdash t_{n} : \|B^{*}\|    ~$ and $~      \|\Gamma \| \vdash x~t_1~...~t_{n-1} : \| \Pi y : B^{*} ~C^{*}\|  $.
\\ By induction hypothesis, we have $~\|\Gamma \| \vdash t_{n}" : \|B^{*}\|    ~$ and 
\\$~      \|\Gamma \| \vdash x~t_1"~...~t_{n-1}" : \Pi y : \| B^{*} \| ~\| C^{*}\|  $.
Finally, by $(\alpha_3)$ and Proposition \ref{etared}.5,
$~\|\Gamma\|  \vdash t" =  x~t_1"~...~t_{n}" :  (t_{n}"/y)C
\equiv \|A\|$.
\end{itemize}
\end{itemize}
\end{itemize}
\end{proof}

\medskip

\begin{theorem}
Let $P$ be a functional Pure Type System, such that $\lambda \Pi_P$ is terminating.
The type $\|A\|$ is inhabited by a closed term in $\lambda \Pi_P$
if and only if the type $A$ is inhabited by a closed term in $P$.
\end{theorem}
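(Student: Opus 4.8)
The plan is to prove the two implications separately. The forward implication, from $P$ to $\lambda\Pi_P$, is an immediate consequence of soundness, whereas the converse is where all the real content sits and is the reason the termination hypothesis is imposed.

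For the direction from $P$ to $\lambda\Pi_P$, suppose $A$ is inhabited by a closed term $u$, i.e.\ $\vdash u : A$ in the empty context of $P$. Since $\|[~]\| = [~]$, Proposition~\ref{correction} gives $\vdash |u| : \|A\|$ in $\lambda\Pi_P$, and $|u|$ is again closed (no free variable is created by the translation). Hence $\|A\|$ is inhabited by a closed term, with no use of termination at all.

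For the converse, suppose $\vdash t : \|A\|$ with $t$ closed in $\lambda\Pi_P$. First I would invoke the termination hypothesis, together with subject reduction (which holds in $\lambda\Pi_P$ and is already used in the proof of Proposition~\ref{conservativity}), to replace $t$ by its normal form $t_0$, still closed and still of type $\|A\|$. The term $t_0$ need not be in weak $\eta$-long form, so I would next apply Proposition~\ref{epilogue}: as $t_0$ is normal it yields $\vdash t_0" : \|A\|$, and by construction of the operation $t \mapsto t"$ every occurrence of $\dot{\Pi}_{\langle s_1,s_2,s_3 \rangle}$ in $t_0"$ lies inside an $\eta$-expansion $\lambda x~\lambda y~(\dot{\Pi}_{\langle s_1,s_2,s_3 \rangle}~x~y)$, so $t_0"$ is in weak $\eta$-long form. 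It is however not normal, since $t\mapsto t"$ creates a $\beta$-redex wherever a $\dot{\Pi}$ was already fully applied (cf.\ the computation in the proof of Proposition~\ref{etared}.3). I would therefore normalize $t_0"$ once more to a term $t_1$; because a $\beta\mathcal{R}$-reduction can never make an occurrence of $\dot{\Pi}$ \emph{less} applied — $\dot{\Pi}$ is never the head $\lambda$ of a $\beta$-redex, and an $\mathcal{R}$-step only deletes such occurrences — the weak $\eta$-long property survives reduction, so $t_1$ is a weak $\eta$-long \emph{normal} term, and subject reduction gives $\vdash t_1 : \|A\|$. Proposition~\ref{conservativity}, applied with $\Gamma = [~]$, then produces a term $u$ of $P$ with $|u| \equiv t_1$ and $\vdash u : A$, and $u$ is closed, which is exactly what we want.

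The substitution and reduction facts are routine, and the conceptual heart is already packaged in Propositions~\ref{conservativity} and~\ref{epilogue}, so the theorem is in essence an assembly of these pieces under the termination assumption. The step I expect to require the most care is the second normalization: one must verify that passing from $t_0"$ to its normal form keeps the term inside the class of weak $\eta$-long normal terms that Proposition~\ref{conservativity} can handle, which is precisely the point where the invariance of the weak $\eta$-long property under $\beta\mathcal{R}$-reduction is needed. The termination hypothesis is what guarantees the two normal forms $t_0$ and $t_1$ exist; without it the restriction to normal inhabitants in Proposition~\ref{conservativity} cannot be discharged, and this is why the theorem is stated under that hypothesis.
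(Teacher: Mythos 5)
Your proof is correct and follows essentially the same route as the paper: soundness (Proposition \ref{correction}) for the forward direction, and termination plus Propositions \ref{epilogue} and \ref{conservativity} for the converse. You are in fact more explicit than the paper itself, which compresses the converse into one sentence and silently elides the second normalization (of $t_0"$, which is weak $\eta$-long but no longer normal) and the stability of weak $\eta$-long forms under $\beta{\cal R}$-reduction --- precisely the points you spell out and justify.
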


\begin{proof}
If $A$ has a closed inhabitant in $P$, then by Proposition
\ref{correction}, $\|A\|$ has a closed inhabitant in $\lambda
\Pi_{P}$. Conversely, if $\|A\|$ has a closed inhabitant then, by
termination of $\lambda \Pi_P$ and Proposition \ref{epilogue}, it has a closed inhabitant in weak
$\eta$-long normal form and by Proposition \ref{conservativity}, $A$ has a closed inhabitant in $P$.
\end{proof}

\begin{remark}
\label{chdb}
This conservativity property we have proved is similar to that of the
Curry-de Bruijn-Howard correspondence. If the type $A^{\circ}$ is
inhabited in $\lambda \Pi$-calculus, then the proposition $A$ is
provable in minimal predicate logic, but not all terms of type
$A^{\circ}$ correspond to proofs of $A$. For instance, if $A$ is the
proposition $(\fa x~P(x)) \Ra P(c)$, then the normal term $\lambda
\alpha : (\Pi x:\iota~(P~x))~(\alpha~c)$ corresponds to a proof of $A$
but the term $\lambda \alpha : (\Pi x:\iota~(P~x))~(\alpha~((\lambda
y:\iota~y)~c))$ does not.
\end{remark}

\begin{remark}
There are two ways to express proofs of simple type theory in the
$\lambda \Pi$-calculus modulo. We can either use directly the fact
that simple type theory can be expressed in Deduction modulo
\cite{DHKHOL} or first express the proofs of simple type theory in the
Calculus of Constructions and then embed the Calculus of Constructions
in the $\lambda \Pi$-calculus modulo. 

These two solutions have some similarities, in particular if we write
$o$ the symbol $U_{Type}$. But they have also some differences: the
function $\lambda x~x$ of simple type theory is translated as the
symbol $I$ ---~or as the term $\lambda 1$~--- in the first case, using
a symbol $I$ ---~or the symbols $\lambda$ and $1$~--- specially
introduced in the context to express this particular theory, while it
is expressed as $\lambda x~x$ using the symbol $\lambda$ of the
$\lambda \Pi$-calculus modulo in the second.

More generally in the second case, we exploit the similarities of
the $\lambda \Pi$-calculus modulo and simple type theory ---~the fact that
they both allow to express functions~--- to simplify the expression
while the first method is completely generic and uses no particularity
of simple type theory. This explains why this first expression requires
only the $\lambda \Pi^-$-calculus modulo, while the second requires
the conversion rule to contain $\beta$-conversion.
\end{remark}

\end{document}